\tikzset{run/.style={minimum size=0.7cm,inner sep=2pt}}
\newcommand{\X}{\mathsf{X}}     
\newcommand{\U}{{\,\uU\,}}      
\newcommand{\uU}{\mathsf{U}}    
\newcommand{\F}{\mathsf{F}}     
\newcommand{\G}{\mathsf{G}}     
\newcommand{\Fs}{{\F\!\s}}      
\newcommand{\Gs}{{\G\s}}        
\newcommand{\s}{_\mathsf{s}}    
\newcommand{\true}{\textrm{{\it tt}}}
\newcommand{\AP}{\mathit{A\hskip-0.1ex P}}
\newcommand{\LTL}{{\mathrm{LTL}}}
\newcommand{\suf}[1]{_{#1..}}
\newcommand{\mA}{\mathcal{A}}
\newcommand{\mB}{\mathcal{B}}
\newcommand{\mD}{\mathcal{D}}
\newcommand{\mG}{\mathcal{G}}
\newcommand{\mGR}{\mathcal{GR}}
\newcommand{\mT}{\mathcal{T}}
\newcommand{\mO}{\mathcal{O}}
\newcommand{\mZ}{\mathcal{Z}}
\newcommand{\A}{_{\mA}}			
\newcommand{\T}{_{\mT}}			
\newcommand{\Inf}{\mathit{Inf\!}}
\newcommand{\AC}{\mathrm{AC}}
\newcommand{\AT}[1]{\mathrm{AT}_{\!#1}}
\newcommand{\must}{\mathit{must}}
\newcommand{\targets}{\mathit{targets}}
\DeclareMathOperator*{\range}{{range}}
\DeclareMathOperator*{\dom}{{dom}}
\newif\ifpgf@rectanglewrc@donecorner@
\def\pgf@rectanglewithroundedcorners@docorner#1#2#3#4{%
  \edef\pgf@marshal{%
    \noexpand\pgfintersectionofpaths
      {%
        \noexpand\pgfpathmoveto{\noexpand\pgfpoint{\the\pgf@xa}{\the\pgf@ya}}%
        \noexpand\pgfpathlineto{\noexpand\pgfpoint{\the\pgf@x}{\the\pgf@y}}%
      }%
      {%
        \noexpand\pgfpathmoveto{\noexpand\pgfpointadd
          {\noexpand\pgfpoint{\the\pgf@xc}{\the\pgf@yc}}%
          {\noexpand\pgfpoint{#1}{#2}}}%
        \noexpand\pgfpatharc{#3}{#4}{\cornerradius}%
      }%
    }%
  \pgf@process{\pgf@marshal\pgfpointintersectionsolution{1}}%
  \pgf@process{\pgftransforminvert\pgfpointtransformed{}}%
  \pgf@rectanglewrc@donecorner@true
}
  \savedmacro\cornerradius{%
    \edef\cornerradius{\pgfkeysvalueof{/pgf/rectangle corner radius}}%
  }
    \edef\pgf@marshal{%
      \noexpand\pgfpointborderrectangle
      {\noexpand\pgfqpoint{\the\pgf@xb}{\the\pgf@yb}}
      {\noexpand\pgfqpoint{\the\pgf@xc}{\the\pgf@yc}}%
    }%
    \pgfextract@process\borderpoint{}%
       \pgf@rectanglewithroundedcorners@docorner{\cornerradius}{0pt}{0}{90}%
\begin{document}
\frontmatter


\title{Effective Translation of LTL to Deterministic Rabin Automata: Beyond
  the (F,G)-Fragment\thanks{This is a full version of the paper accepted to
    ATVA 2013.}}
\author{Tom\'{a}\v{s} Babiak \and Franti\v{s}ek Blahoudek \and Mojm\'{i}r
  K\v{r}et\'{i}nsk\'{y} \and Jan Strej\v{c}ek}
\institute{Faculty of Informatics, Masaryk University, Brno, Czech Republic\\
  \email{\{xbabiak,\,xblahoud,\,kretinsky,\,strejcek\}@fi.muni.cz}}


\maketitle


\begin{abstract}
  Some applications of linear temporal logic (LTL) require to translate
  formulae of the logic to deterministic $\omega$-automata. There are
  currently two translators producing deterministic automata:
  \texttt{ltl2dstar} working for the whole LTL and Rabinizer applicable to
  LTL($\F,\G$) which is the LTL fragment using only modalities $\F$ and $\G$.  We
  present a new translation to deterministic Rabin automata via alternating
  automata and deterministic transition-based generalized Rabin
  automata. Our translation applies to a fragment that is strictly larger
  than LTL($\F,\G$). Experimental results show that our algorithm can produce
  significantly smaller automata compared to Rabinizer and
  \texttt{ltl2dstar}, especially for more complex LTL formulae.
\end{abstract}	


\section{Introduction}

\emph{Linear temporal logic (LTL)} is a popular formalism for specification
of behavioral system properties with major applications in the area of model
checking~\cite{CGP99,BK08}. More precisely, LTL is typically used as a
human-oriented front-end formalism as LTL formulae are succinct and easy to
write and understand. Model checking algorithms usually work with an
$\omega$-automaton representing all behaviors violating a given
specification formula rather than with the LTL formula directly. Hence,
specifications written in the form of LTL formulae are negated and
translated to equivalent $\omega$-automata~\cite{VardiW86}. There has been a
lot of attention devoted to translation of LTL to \emph{nondeterministic
  B\"uchi automata (NBA)}, see for example~\cite{Cou99,DGV99,SB00,GO01} and
the research in this direction still continues~\cite{DL11,BKRS12,BBDL13}.  However,
there are algorithms that need specifications given by \emph{deterministic}
$\omega$-automata, for example, those for LTL model checking of
probabilistic systems~\cite{Vardi85,CY95,BK08} and those for synthesis of
reactive modules for LTL specifications~\cite{Church62,PnueliR89}, for a
recent survey see \cite{Kupferman12}.
As \emph{deterministic B\"uchi automata (DBA)} cannot express all the
properties expressible in LTL, one has to choose deterministic automata with
different acceptance condition.

There are basically two approaches to translation of LTL to
deterministic $\omega$-automata. The first one translates LTL to NBA
and then it employs Safra's construction~\cite{Saf88} (or some of its
variants or alternatives like~\cite{Pit07,Sch09}) to transform the NBA into
a deterministic automaton. This approach is represented by the
tool \texttt{ltl2dstar}~\cite{Kle} which uses an improved Safra's
construction~\cite{KB06,KB07} usually in connection with LTL to NBA
translator LTL2BA~\cite{GO01}. The main advantage of this approach is its
universality: as LTL2BA can translate any LTL formula into an NBA and the
Safra's construction can transform any NBA to a \emph{deterministic Rabin
  automaton (DRA)}, \texttt{ltl2dstar} works for the whole LTL. The main
disadvantage is also connected with the universality: the determinization
step does not employ the fact that the NBA represents only an LTL definable
property. One can easily observe that \texttt{ltl2dstar} produces
unnecessarily large automata, especially for formulae with more fairness
subformulae.

The second approach is to avoid Safra's construction.
As probabilistic model-checkers deal with linear arithmetic, 
they do not profit 
from symbolically represented deterministic automata
of~\cite{PPS06,MorgensternS08}.
A few translations of some simple LTL fragments to DBA have been
suggested, for example \cite{AT04}.  
Recently, a translation of
a~significantly larger LTL fragment to DRA has been introduced
in~\cite{KE12} and subsequently implemented in the tool
Rabinizer~\cite{GKE12}. The algorithm builds a~\emph{generalized
  deterministic Rabin automata (GDRA)} directly from a formula. A~DRA is
then produced by a degeneralization procedure. Rabinizer often produces
smaller automata than \texttt{ltl2dstar}. The main disadvantage is that
it works for LTL($\F,\G$) only, i.e.~the LTL fragment containing only
temporal operators \emph{eventually} ($\F$) and \emph{always}
($\G$). Authors of the translation claim that it can be extended to
a~fragment containing also the operator \emph{next} ($\X$).

In this paper, we present another Safraless translation of an LTL
fragment to DRA. 
The translation is influenced by the successful LTL
to NBA translation algorithm LTL2BA~\cite{GO01}
and it proceeds in the following three steps:
\begin{enumerate}
\item A given LTL formula $\varphi$ is translated into a \emph{very weak
    alternating co-B\"uchi automaton (VWAA)} $\mA$ as described
  in~\cite{GO01}. If $\varphi$ is an LTL($\Fs,\Gs$) formula, i.e.~any
  formula which makes use of $\F$, $\G$, and their strict variants $\Fs$ and
  $\Gs$ as the only temporal operators,
  then $\mA$ satisfies an additional structural condition. We call such
  automata \emph{may/must alternating automata (MMAA)}.
\item The MMAA $\mA$ is translated into a \emph{transition-based
    generalized deterministic Rabin automaton (TGDRA)} $\mG$.  The
  construction of generalized Rabin pairs of $\mG$ is inspired
  by~\cite{KE12}.
\item Finally, 
  $\mG$ is degeneralized into a (state-based) DRA $\mD$.  
\end{enumerate}

In summary, our contributions are as follows.  First, note that the
fragment LTL($\Fs,\Gs$) is strictly more expressive than
LTL($\F,\G$). Moreover, it can be shown that our translation works for a
fragment even larger than LTL($\Fs,\Gs$) but still smaller than the
whole LTL. 
Second, the translation has a slightly better theoretical bound on the
size of produced automata comparing to \texttt{ltl2dstar}, but the same
bound as Rabinizer. Experimental results show that, for small formulae,
our translation typically produces automata of a smaller or equal size
as the other two translators.  However, for parametrized formulae, it
often produces automata that are significantly smaller. Third, we note
that our TGDRA are much smaller than the (state-based) GDRA
of~Rabinizer~\cite{GKE12}. We conjecture that algorithms for model checking of
probabilistic system, e.g.~those in PRISM~\cite{KNP11}, can be adapted
to work with TGDRA as they are adapted to work with GDRA~\cite{CGK13}.

\section{Preliminaries}\label{sec:prelim}
This section recalls the notion of linear temporal logic (LTL)~\cite{Pnu77} 
and describes the $\omega$-automata used in the following.


\subsubsection{Linear Temporal Logic (LTL)}

The syntax of LTL is defined by
\[
\varphi~::=~\true~\mid~a~\mid~\neg\varphi~\mid~\varphi\vee\varphi~\mid~
\varphi\wedge\varphi~\mid~\X\varphi~\mid~ \varphi\U\varphi\textrm{,}
\]
where $\true$ stands for \emph{true}, $a$ ranges over a countable set $\AP$
of \emph{atomic propositions}, $\X$ and $\uU$ are temporal operators called
\emph{next} and \emph{until}, respectively. An \emph{alphabet} is a finite
set $\Sigma=2^{\AP'}$, where $\AP'$ is a finite subset of $\AP$. An
\emph{$\omega$-word} (or simply a \emph{word}) over $\Sigma$ is an infinite
sequence of letters $u=u_0u_1u_2\ldots\in\Sigma^\omega$. By $u\suf{i}$ we
denote the suffix $u\suf{i}=u_iu_{i+1}\ldots$.
%
%

We inductively define when a word $u$ \emph{satisfies} a formula $\varphi$,
written $u\models\varphi$, as follows.
%
\begin{center}
\begin{tabbing}
  \hspace*{1em} \= $u\models\true$\\
  \> $u\models a$ \hspace*{2.7em} \= iff~ \= $a\in u_0$\\
  \> $u\models\neg\varphi$ \> iff \> $u\not\models\varphi$\\
  \> $u\models\varphi_1\vee\varphi_2$ \> iff \>
  $u\models\varphi_1$ or $u\models\varphi_2$\\
  \> $u\models\varphi_1\wedge\varphi_2$ \> iff \>
  $u\models\varphi_1$ and $u\models\varphi_2$\\
  \> $u\models\X\varphi$ \> iff \> $u\suf{1}\models\varphi$\\
  \> $u\models\varphi_1\U\varphi_2$ \> iff \>
     $\exists i\ge 0\,.\,(\,u\suf{i}\models\varphi_2$ and
     $\forall\, 0\leq j<i\,.~u\suf{j}\models\varphi_1\,)$
\end{tabbing}
\end{center}

Given an alphabet $\Sigma$, a formula $\varphi$ defines the language
$L^\Sigma(\varphi)=\{u\in\Sigma^\omega\mid u\models\varphi\}$. We 
write $L(\varphi)$ instead of $L^{2^{\AP(\varphi)}}(\varphi)$, where
$\AP(\varphi)$ denotes the set of atomic propositions occurring in the
formula $\varphi$.

We define derived unary temporal operators \emph{eventually}
($\F$), \emph{always} ($\G$), \emph{strict eventually} ($\Fs$), and
\emph{strict always} ($\Gs$) by the
following equivalences: $\F\varphi\equiv\true\U\varphi$,
$\G\varphi\equiv\neg\F\neg\varphi$, $\Fs\varphi\equiv\X\F\varphi$, and
$\Gs\varphi\equiv\X\G\varphi$.

LTL($\F,\G$) denotes the LTL fragment consisting of formulae built with
temporal operators $\F$ and $\G$ only.  The fragment build with temporal
operators $\Fs$, $\Gs$, $\F$ and $\G$ is denoted by LTL($\Fs,\Gs$) as
$\F\varphi$ and $\G\varphi$ can be seen as abbreviations for
$\varphi\lor\Fs\varphi$ and $\varphi\land\Gs\varphi$, respectively.  Note
that LTL($\Fs,\Gs$) is strictly more expressive than LTL($\F,\G$) as
formulae $\Fs a$ and $\Gs a$ cannot be equivalently expressed in
LTL($\F,\G$). 

An LTL formula is in \emph{positive normal form} if no operator occurs in
the scope of any negation.  Each LTL($\Fs,\Gs$) formula can be transformed
to this form using De Morgan's laws for $\land$ and $\lor$ and the
equivalences $\neg\Fs\psi\equiv\Gs\neg\psi$, $\neg\Gs\psi\equiv\Fs\neg\psi$,
$\neg\F\psi\equiv\G\neg\psi$, and $\neg\G\psi\equiv\F\neg\psi$. We say that
a formula is \emph{temporal} if its topmost operator is neither conjunction,
nor disjunction (note that $a$ and $\neg a$ are also temporal formulae).


\subsubsection{Deterministic Rabin Automata and Their Generalization}


A \emph{semiautomaton} is a tuple $\mT=(S,\Sigma,\delta,s_I)$, where $S$ is
a finite set of \emph{states}, $\Sigma$ is an alphabet, $s_I\in S$ is the
\emph{initial state}, and $\delta\subseteq S\times\Sigma\times S$ is a
deterministic \emph{transition relation}, i.e.~for each state $s\in S$ and
each $\alpha\in\Sigma$, there is at most one state $s'$ such that
$(s,\alpha,s')\in\delta$. A triple $(s,\alpha,s')\in\delta$ is called a
\emph{transition} from $s$ to $s'$ labelled by $\alpha$, or an
$\alpha$-transition of $s$ leading to $s'$.  In illustrations, all
transitions with the same source state and the same target state are usually
depicted by a single edge labelled by a propositional formula $\psi$ over
$\AP$ representing the corresponding transition labels (e.g.~given
$\Sigma=2^{\{a,b\}}$, the formula $\psi=a\vee b$ represents labels
$\{a\},\{a,b\},\{b\}$).

A \emph{run} of a semiautomaton $\mT$ over a word $u=
u_0u_1\ldots\in\Sigma^\omega$ is an infinite sequence $\sigma=
(s_0,u_0,s_1)(s_1,u_1,s_2)\ldots\in\delta^\omega$ of transitions such that
$s_0=s_I$. By $\Inf_t(\sigma)$ (resp.~$\Inf_s(\sigma)$) we denote the set of
transitions (resp.~states) occurring infinitely often in $\sigma$. For each
word $u\in\Sigma^\omega$, a semiautomaton has at most one run over $u$
denoted by $\sigma(u)$.

A \emph{deterministic Rabin automaton} (DRA) is a tuple
$\mD=(S,\Sigma,\delta,s_I,\mathcal{R})$, where $(S,\Sigma,\delta,s_I)$ is a
semiautomaton and $\mathcal{R} \subseteq 2^{S} \times 2^{S}$ is a finite set
of \emph{Rabin pairs}.
Runs of $\mD$ are runs of the semiautomaton.  A run $\sigma$ \emph{satisfies}
a Rabin pair $(K,L)\in\mathcal{R}$ if $\Inf_s(\sigma)\cap K=\emptyset$ and
$\Inf_s(\sigma)\cap L\neq\emptyset$.  A run is \emph{accepting} if it
satisfies some Rabin pair of $\mathcal{R}$. The language of $\mD$ is the set
$L(\mD)$ of all words $u\in\Sigma^\omega$ such that $\sigma(u)$ is
accepting.


A \emph{transition-based generalized deterministic Rabin automaton} (TGDRA)
is a tuple $\mG=(S,\Sigma,\delta,s_I,\mathcal{GR})$, where
$(S,\Sigma,\delta,s_I)$ is a semiautomaton and $\mathcal{GR} \subseteq
2^{\delta} \times 2^{2^{\delta}}$ is a finite set of \emph{generalized Rabin
  pairs}.  
Runs of $\mG$ are runs of the semiautomaton.  A run $\sigma$
\emph{satisfies} a generalized Rabin pair $(K,\{L_j\}_{j\in
  J})\in\mathcal{GR}$ if $\Inf_t(\sigma)\cap K=\emptyset$ and, for each
$j\in J$, $\Inf_t(\sigma)\cap L_j\neq\emptyset$.  A run is \emph{accepting}
if it satisfies some generalized Rabin pair of $\mathcal{GR}$.  The language
of $\mG$ is the set $L(\mG)$ of all words $u\in\Sigma^\omega$ such that
$\sigma(u)$ is accepting.

A generalization of DRA called \emph{generalized deterministic Rabin
  automata} (GDRA) has been considered in~\cite{KE12,GKE12}. The accepting
condition of GDRA is a positive Boolean combination (in disjunctive normal
form) of Rabin pairs. A run $\sigma$ is accepting if $\sigma$ satisfies this
condition.




\subsubsection{Very Weak Alternating Automata and Their Subclass}
\label{def:MMAA}

A \emph{very weak alternating co-B\"{u}chi automaton} (VWAA) $\mA$
is a tuple $(S,\Sigma,\delta,I,F)$, where $S$ is a finite set of
\emph{states}, subsets $c\subseteq S$ are called \emph{configurations},
$\Sigma$ is an \emph{alphabet}, $\delta \subseteq S \times \Sigma \times
2^S$ is an \emph{alternating transition relation}, $I \subseteq 2^S$ is a
non-empty set of \emph{initial configurations}, $F\subseteq S$ is a set of
\emph{co-B\"{u}chi accepting states}, and there exists a partial order on
$S$ such that, for every transition $(s,\alpha,c)\in\delta$, all the states
of $c$ are lower or equal to $s$.

A triple $(s,\alpha,c)\in\delta$ is called a transition from $s$ to $c$
labelled by $\alpha$, or an $\alpha$-transition of $s$.  We say that $s$ is
the \emph{source state} and $c$ the \emph{target configuration} of the
transition.  A transition is \emph{looping} if the target configuration
contains the source state, i.e.~$s\in c$.  A transition is called a
\emph{selfloop} if its target configuration contains the source state only,
i.e.~$c=\{s\}$.


\begin{figure}[!t]
  \centering
  \subfigure[]{\label{fig:alt}
    \begin{tikzpicture}[auto,initial where=above,initial text=,semithick,>=stealth]

      \node[state,initial]   (s0) at (0,2){$\G\psi$};
      \node[state,accepting](s1) at (0,0){$\F a$};
      \node[state,accepting](s2) at (1.3,0){$\F b$};
      \node[state,initial]   (s3) at (1.3,2){$\G b$};
      \node[] (placeholder) at (0.65,-1) {\strut};

      \path[->] 
      (s0) edge node [right,pos=0.65] {$\true$} (s1) 
      (s0) edge [out=-90,in=90,looseness=1.5] node {} (s2)
      (s0) edge [looseness=8,out=-90,in=-120,overlay] node [] {} (s0);
      \path[->] 
      (s3) edge [looseness=8,out=-75,in=-45,overlay] node [right,pos=.63] {$b$} (s3);
      \path[->] 
      (s1) edge [looseness=8,out=-105,in=-135,overlay] node [left,pos=.63] {$\true$} (s1)
      (s1) edge [] node [right] {$a$} +(0,-1)	 	 
      (s2) edge [looseness=8,out=-75,in=-45,overlay] node [right,pos=.63] {$\true$} (s2)
      (s2) edge [] node [left] {$b$} +(0,-1);
    \end{tikzpicture}
  }~
  \subfigure[]{\label{fig:vwaarun}
    \def\xbase{0.7} 
    \def\yG{3}
    \def\yFa{2}
    \def\yFb{1}
    \def\yGb{0}
    \def\yl{\yG+0.4}
    \def\ylast{\yGb}
    \def\length{8}

    \begin{tikzpicture}[inner sep=0pt,>=stealth,
      dot/.style={fill=black,circle,minimum size=5pt},xscale=0.85]
      \node[state,run] at (0,\yG) {$\G\psi$};
      \node[state,run,accepting] at (0,\yFa) {$\F a$};
      \node[state,run,accepting] at (0,\yFb) {$\F b$};
      \node[state,run] at (0,\yGb) {$\G b$};
      \node[] (placeholder) at (0.65,-.5) {\strut};

      \node at (0.5+\xbase, \yG+0.5) {$\{a\}$};
      \node at (1.5+\xbase, \yG+0.5) {$\emptyset$};
      \node at (2.5+\xbase, \yG+0.5) {$\{b\}$};
      \node at (3.5+\xbase, \yG+0.5) {$\{a,b\}$};
      \node at (4.5+\xbase, \yG+0.5) {$\{a\}$};
      \node at (5.5+\xbase, \yG+0.5) {$\emptyset$};
      \node at (6.5+\xbase, \yG+0.5) {$\{b\}$};
      \node at (7.5+\xbase, \yG+0.5) {$\{a,b\}$};

      \foreach \x in {0,...,\length} \node[dot] (G\x) at (\xbase+\x,\yG) {};
      \foreach \x in {1,...,\length} \node[dot] (Fa\x) at (\xbase+\x,\yFa) {};
      \foreach \x in {1,...,\length} \node[dot] (Fb\x) at (\xbase+\x,\yFb) {};

      \foreach \x in {0,...,7} {
	\node[dot] (G\x+1) at (\xbase+\x+1,\yG) {};
	\node[dot] (Fa\x+1) at (\xbase+\x+1,\yFa) {};
	\node[dot] (Fb\x+1) at (\xbase+\x+1,\yFb) {};	
	\path[->] (G\x) edge (G\x+1);
	\path[->] (G\x) edge (Fa\x+1);
	\path[->] (G\x) edge (Fb\x+1);
      }

      \path[->] (Fb1) edge (Fb2)
                (Fa1) edge (Fa2)
                (Fa2) edge (Fa3)
                (Fb4) edge (Fb5)	
                (Fb5) edge (Fb6)
                (Fa5) edge (Fa6)	
                (Fa6) edge (Fa7);
                
      \path[->] (Fb2) edge +(0.3,-0.3)
      		    (Fa3) edge +(0.3,-0.3)
      		    (Fb3) edge +(0.3,-0.3)
      		    (Fa4) edge +(0.3,-0.3)
				(Fb7) edge +(0.3,-0.3)
				(Fa7) edge +(0.3,-0.3)
				(Fb6) edge +(0.3,-0.3);
      \node at (\xbase+\length+0.5,\yFb+0.5) {$\cdots$};
		  
      \foreach \x in {0,...,\length} {
        \draw[dotted] (\xbase+\x,\ylast-0.3) to (\xbase+\x,\yG+0.3);
        \node at (\xbase+\x,\yG+0.3) {\tiny{\x}};
      }

      \foreach \x in {0,...,7} \node at (\xbase+\x+0.5,\ylast-0.5) {\small{$T_{\x}$}};
    \end{tikzpicture}
  }
  \caption{\subref{fig:alt} A VWAA (and also
    MMAA) 
    corresponding to formula $\G\psi\lor\G b$, where $\psi=\Fs a\land\Fs
    b$. 
    \subref{fig:vwaarun} An accepting run of the automaton
    over $(\{a\}\emptyset\{b\}\{a,b\})^\omega$.}
\end{figure}


Figure~\ref{fig:alt} shows a VWAA that accepts the
language described by the formula $\G(\Fs a \land \Fs b) \lor \G b$.
Transitions are depicted by branching edges.  If a target configuration
is empty, the corresponding edge leads to an empty space.  We often depict all
transitions with the same source state and the same target configuration by
a single edge (as for semiautomata).  Each initial configuration is
represented by a possibly branching unlabelled edge leading from an empty
space to the states of the configuration.  Co-B\"{u}chi accepting states are
double circled.


A \emph{multitransition} $T$ with a label $\alpha$ is a set of transitions
with the same label and such that the source states of the transitions are
pairwise different.  A \emph{source configuration} of $T$, denoted by
$\dom(T)$, is the set of source states of transitions in $T$.  A
\emph{target configuration} of $T$, denoted by $\range(T)$, is the union of
target configurations of transitions in $T$.  We define a
\emph{multitransition relation} $\Delta \subseteq 2^S \times \Sigma \times
2^S$ as $$\Delta=\{(\dom(T), \alpha, \range(T)) \mid \text{there exists a
  multitransition } T \text{ with label }\alpha\}.$$


A \emph{run} $\rho$ of a VWAA $\mA$ over a word $w = w_0w_1\ldots \in
\Sigma^\omega$ is an infinite sequence $\rho=T_0T_1\ldots$ of multitransitions of
$\mA$ such that $\dom(T_0)$ is an initial configuration of $\mA$ and, for
each $i\ge 0$, $T_i$ is labelled by $w_i$ and $\range(T_i) =
\dom(T_{i+1})$. 

A run can be represented as a directed acyclic graph (DAG).  For example,
the DAG of Figure~\ref{fig:vwaarun} represents a run of the VWAA of
Figure~\ref{fig:alt}.  The dotted lines divide the DAG into segments
corresponding to multitransitions.  Each transition of a multitransition is
represented by edges leading across the corresponding segment from the
starting state to states of the target configuration. As our alternating
automata are very weak, we can order the states in a way that all edges
in any DAG go only to the same or a lower row.

An accepting run corresponds to a DAG where each branch contains only
finitely many states from $F$.  Formally, the run $\rho$ is \emph{accepting}
if it has no suffix where, for some co-B\"{u}chi accepting state $f\in F$,
each multitransition contains a looping transition from $f$. The language of
$\mA$ is the set $L(\mA)=\{w\in\Sigma^\omega\mid\mA\text{ has an accepting
  run of over }w\}$.  By $\Inf_s(\rho)$ we denote the set of states that
occur in $\dom(T_i)$ for infinitely many indices $i$.



\begin{definition}\label{def:mmaa}
  A \emph{may/must alternating automaton} (MMAA)
  is a VWAA where each state fits into one of the following three
  categories:
  \begin{enumerate}
  \item \emph{May-states} -- states with a selfloop for each
    $\alpha\in\Sigma$.  A run that enters such a state \emph{may} wait in
    the state for an arbitrary number of steps.
  \item \emph{Must-states} -- every transition of a must-state is looping. A
    run that enters such a state can never leave it. In other words, the run
    \emph{must} stay there.
  \item \emph{Loopless states} -- states that have no looping transitions
    and no predecessors. They can appear only in initial configurations (or
    they are unreachable).
  \end{enumerate}
\end{definition}

The automaton of Figure~\ref{fig:alt} is an MMAA with must-states
$\G\psi,\G b$ and may-states $\F a,\F b$.

We always assume that the set $F$ 
of an MMAA coincides with the set of all may-states of the automaton.  This
assumption is justified by the following observations:
\begin{itemize}
\item There are no looping transitions of loopless states. Hence,
  removing all loopless states from $F$ has no effect on acceptance of
  any run.
\item All transitions leading from must-states are looping. Hence, if a
  run contains a must-state that is in $F$, then the run is
  non-accepting.  Removing all must-states in $F$ together with their
  adjacent transitions from an MMAA has no effect on its accepting runs.
\item Every may-state has selfloops for all $\alpha\in\Sigma$. If such a
  state is not in $F$, we can always apply these selfloops without violating
  acceptance of any run. We can also remove these states from all the target
  configurations of all transitions of an MMAA without affecting its
  language.
\end{itemize}



\section{Translation of LTL($\Fs,\Gs$) to MMAA}\label{sec:corr}

We present the standard translation of LTL to VWAA~\cite{GO01} restricted to
the fragment LTL($\Fs,\Gs$).  In this section, we treat the transition
relation $\delta\subseteq S\times\Sigma\times2^S$ of a VWAA as a function
$\delta:S\times\Sigma\rightarrow2^{2^S}$, where $c\in\delta(s,\alpha)$ means
$(s,\alpha,c)\in\delta$.  Further, we consider $\G\psi$ and $\F\psi$ to be
subformulae of $\Gs\psi$ and $\Fs\psi$, respectively.  This is justified by
equivalences $\Gs\psi\equiv\X\G\psi$ and $\Fs\psi\equiv\X\F\psi$.  Recall
that a formula is called \emph{temporal} if its topmost operator is neither
conjunction, nor disjunction (note that $a$ and $\neg a$ are also temporal
formulae).

Let $\varphi$ be an $\LTL(\Fs,\Gs)$ formula in positive normal form.  An
equivalent VWAA is constructed as $\mA_\varphi=(Q,\Sigma,\delta,I,F)$, where
\begin{itemize}
\item $Q$ is the set of temporal subformulae of $\varphi$,
\item $\Sigma=2^{\AP(\varphi)}$,
\item $\delta$ is defined as 
  \[
  \begin{array}{rclp{8ex}rcl}
    \delta(\true,\alpha) & = & \{\emptyset\} 
    && \delta(a,\alpha) & = &
    \{\emptyset\} \textrm{ if }a\in\alpha\textrm{, }\emptyset\textrm{ otherwise} 
    \\
    \delta(\neg\true,\alpha) & = & \emptyset 
    && \delta(\neg a,\alpha) & = & 
    \{\emptyset\} \textrm{ if }a\not\in\alpha\textrm{, }\emptyset\textrm{ otherwise} 
    \\
    \delta(\Gs\psi,\alpha) & = & \{\{\G\psi\}\} 
    && \delta(\G\psi,\alpha) & = & \{c\cup\{\G\psi\}\mid c\in\overline{\delta}(\psi,\alpha)\}
    \\
    \delta(\Fs\psi,\alpha) & = & \{\{\F\psi\}\} 
    && \delta(\F\psi,\alpha) & = &
    \{\{\F\psi\}\}\cup\overline{\delta}(\psi,\alpha)\textrm{, where}
    \\
  \end{array}
  \]
  \[
  \begin{array}{rcl}
    \overline{\delta}(\psi,\alpha) & = & 
    \delta(\psi,\alpha)\ \textrm{if }\psi\textrm{ is a temporal formula} \\
    \overline{\delta}(\psi_1\lor\psi_2,\alpha) & = & \overline{\delta}(\psi_1,\alpha)\cup\overline{\delta}(\psi_2,\alpha) \\
    \overline{\delta}(\psi_1\land\psi_2,\alpha) & = & 
    \{c_1\cup c_2\mid c_1\in\overline{\delta}(\psi_1,\alpha)\textrm{ and }c_2\in\overline{\delta}(\psi_2,\alpha)\}\textrm, \\
  \end{array}
  \]
\item $I=\overline{\varphi}$ where $\overline{\varphi}$ is defined
  as 
  \[
  \begin{array}{rcl}
    \overline{\psi} & = & \{\{\psi\}\}\textrm{ if }\psi\textrm{ is a temporal formula} \\
    \overline{\psi_1 \vee \psi_2} & = & \overline{\psi_1}\cup\overline{\psi_2} \\
    \overline{\psi_1\land\psi_2} & = &
    \{O_1\cup O_2\mid O_1\in\overline{\psi_1}\textrm{ and }O_2\in\overline{\psi_2}\} \textrm{, and}
  \end{array}
  \]
\item $F\subseteq Q$ is the set of all subformulae of the form $\F\psi$ in $Q$.
\end{itemize}

Using the partial order ``is a subformula of'' on states, one can easily
prove that $\mA_\varphi$ is a VWAA.  Moreover, all the states of the form
$\G\psi$ are must-states and all the states of the form $\F\psi$ are may-states.
States of other forms are loopless and they are unreachable unless they
appear in $I$. Hence, the constructed automaton is also an MMAA.
Figure~\ref{fig:alt} shows an MMAA produced by the translation of formula
$\G(\Fs a \land \Fs b) \lor \G b$.


In fact, MMAA and LTL($\Fs,\Gs$) are expressively equivalent.  The reverse
translation can be found in Appendix~\ref{sec:mmaa2ltl}.


\section{Translation of MMAA to TGDRA}
\label{sec:MMAA2DRA}
In this section we present a translation of an MMAA $\mA =
(S,\Sigma,\delta\A,I,F)$ with multitransition relation $\Delta\A$ into an
equivalent TGDRA $\mG$. At first we build a semiautomaton $\mT$ by a double
powerset construction (performing dealternation and determinization of the
MMAA). Then we describe the transition based generalized Rabin acceptance
condition $\mGR$ of $\mG$.


\subsection{Semiautomaton $\mT$}
\label{sec:ts}
The idea of our seminautomaton construction is straightforward: a run
$\sigma(w)$ of the semiautomaton $\mT$ tracks all runs of $\mA$ over $w$.
More precisely, the state of $\mT$ reached after reading a finite input
consists of all possible configurations in which $\mA$ can be after reading
the same input.  Hence, states of the semiautomaton are sets of
configurations of $\mA$ and we call them \emph{macrostates}.  We use
$f,s,s_1,s_2,\ldots$ to denote states of $\mA$ ($f$ stands for an accepting
state of $F$), $c,c_1,c_2,\ldots$ to denote configurations of $\mA$, and
$m,m_1,m_2,\ldots$ to denote macro\-states of $\mT$.  Further, we use
$t,t_1,t_2\ldots$ to denote the transitions of $\mA$, $T,T_0,T_1\ldots$ to
denote multitransitions of $\mA$, and $r,r_1,r_2\ldots$ to denote the
transitions of $\mT$, which are called \emph{macrotransitions} hereafter.

Formally, we define the \emph{semiautomaton $\mT = (M,\Sigma,\delta\T,m_{I})$} 
for $\mA$ as follows:
\begin{itemize}
\item $M \subseteq 2^{2^{S}}$ is the set \emph{macrostates}, restricted to
  those reachable from the initial macrostate $m_I$ by $\delta\T$,
\item $(m_1,\alpha,m_2) \in \delta\T$ iff $m_2 = \bigcup_{c\in m_1} \{c'
  \mid (c, \alpha, c') \in \Delta\A \}$, i.e.~for each $m_1\in M$ and
  $\alpha\in\Sigma$, there is a single macrotransition
  $(m_1,\alpha,m_2)\in\delta\T$, where $m_2$ consists of target
  configurations of all $\alpha$-multitransitions leading from
  configurations in $m_1$, and
\item $m_{I}= I$ is the \emph{initial macrostate}.
\end{itemize}
%
%


Figure~\ref{fig:ts} depicts the semiautomaton $\mT$ for the MMAA of
Figure~\ref{fig:alt}. Each row in a macrostate represents one configuration.

\begin{figure}[t]
  \centering
    \begin{tikzpicture}[auto,initial where=left,initial text=,
      semithick,>=stealth,trim left=(init.west),trim right=(gp.east)]
      \tikzstyle{state}=[shape=rectangle with rounded corners,
        rectangle corner radius=15pt,thick,draw,
        minimum width=2.1cm, minimum height=1.1cm,inner sep=2pt,align=center]
      \node[state,initial] (init) at (0,0) {$\{\G\psi\}$ \\ $\{\G b\} $};
      \node[state] (gpgb) at (3,0) {$\{\G\psi, \F a, \F b\}$ \\ $\{\G b\}$};
      \node[state] (gp) at (6,0) {$\{\G\psi, \F a, \F b\}$};
      \path[->]
      (init) edge [] node [above] {$b$} (gpgb)
      (init) edge [bend right=23] node [below,overlay] {$\neg b$} (gp)	
      (gpgb) edge [loop above,looseness=6] node [above] {$b$} (gpgb)
      (gpgb) edge [] node [above] {$\neg b$} (gp)
      (gp) edge [loop above,looseness=6] node [above] {$\true$} (gp);
    \end{tikzpicture}
    
  \caption{The semiautomaton $\mT$ for the MMAA of Figure \ref{fig:alt}.}
  \label{fig:ts}
\end{figure}



\subsection{Acceptance Condition $\mGR$ of the TGDRA $\mG$}
For any subset $Z\subseteq S$, $\must(Z)$ denotes the set of must-states
of $Z$. An MMAA run $\rho$ is \emph{bounded by} $Z \subseteq S$ iff $\Inf_s(\rho)
\subseteq Z$ and $\must(\Inf_s(\rho))=\must(Z)$.  For example, the run of
Figure~\ref{fig:vwaarun} is bounded by the set $\{\G\psi, \F a, \F b\}$.

For any fixed $Z \subseteq S$, we define the set $\AC_Z \subseteq 2^{S}$ of
\emph{allowed configurations} of $\mA$ and the set $\AT{Z} \subseteq
\delta\T$ of \emph{allowed macrotransitions} of $\mT$ as follows:
\begin{eqnarray*}
  \AC_Z&=&\{ c \subseteq Z \mid \must(c) = \must(Z) \}\\
  \AT{Z}&=&\{ (m_1,\alpha,m_2) \in \delta\T \mid \exists c_1 \in \AC_Z, 
  c_2 \in (m_2 \cap \AC_Z) : (c_1,\alpha,c_2) \in \Delta\A \}\footnotemark
\end{eqnarray*}
\footnotetext{A definition of $\AT{Z}$ with $c_1\in(m_1 \cap \AC_Z)$ would
  be more intuitive, but less effective.}  Clearly, a run $\rho$ of $\mA$ is
bounded by $Z$ if and only if $\rho$ has a suffix containing only
configurations of $\AC_Z$. Let $\rho$ be a run over $w$ with such a
suffix. As the semiautomaton $\mT$ tracks all runs of $\mA$ over a given
input, the run $\sigma(w)$ of $\mT$ `covers' also $\rho$. Hence, $\sigma(w)$
has a suffix where, for each macrotransition $(m_i,w_i,m_{i+1})$, there
exist configurations $c_1\in m_i\cap\AC_Z$ and $c_2\in m_{i+1}\cap\AC_Z$
satisfying $(c_1,w_i,c_2)\in\Delta\A$. In other words, $\sigma(w)$ has a
suffix containing only macrotransitions of $\AT{Z}$.  This observation is
summarized by the following lemma.
\begin{lemma}\label{lem:aux}
  If $\mA$ has a run over $w$ bounded by $Z$, then the run $\sigma(w)$ of
  $\mT$ contains a suffix of macrotransitions of $\AT{Z}$.
\end{lemma}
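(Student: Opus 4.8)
The plan is to transfer the equivalence already noted before the lemma --- that a run $\rho$ of $\mA$ is bounded by $Z$ precisely when it has a suffix all of whose configurations lie in $\AC_Z$ --- to the run $\sigma(w)$ of $\mT$, exploiting that $\mT$ tracks every run of $\mA$. Concretely, I would fix a run $\rho = T_0 T_1 \cdots$ of $\mA$ over $w$ that is bounded by $Z$, together with the run $\sigma(w) = (m_0, w_0, m_1)(m_1, w_1, m_2)\cdots$ of $\mT$, and first establish the \emph{tracking invariant} $\dom(T_i) \in m_i$ for every $i$. This is a routine induction: the base case $\dom(T_0) \in I = m_0$ holds because $\dom(T_0)$ is an initial configuration, and in the step, if $\dom(T_i) \in m_i$ then $(\dom(T_i), w_i, \range(T_i)) \in \Delta\A$ forces $\range(T_i) \in m_{i+1}$ by the definition of $\delta\T$; since $\range(T_i) = \dom(T_{i+1})$, the invariant propagates.

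Next I would make the direction boundedness $\Rightarrow$ allowed suffix precise. Because $\Inf_s(\rho) \subseteq Z$ and $S$ is finite, every state outside $Z$ occurs in only finitely many $\dom(T_i)$, so there is an index $N$ with $\dom(T_i) \subseteq Z$ for all $i \geq N$; in particular $\must(\dom(T_i)) \subseteq \must(Z)$. For the reverse inclusion I would invoke the MMAA structure: each must-state has only looping transitions, so once a must-state appears in some $\dom(T_j)$ it appears in every later configuration. Since $\must(\Inf_s(\rho)) = \must(Z)$, each must-state of $Z$ lies in $\Inf_s(\rho)$, hence occurs infinitely often and therefore eventually persists; enlarging $N$ to dominate these finitely many entry points gives $\must(Z) \subseteq \dom(T_i)$ for $i \geq N$. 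Combined with $\dom(T_i) \subseteq Z$ this yields $\must(\dom(T_i)) = \must(Z)$, so $\dom(T_i) \in \AC_Z$ for all $i \geq N$.

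Finally I would read off membership in $\AT{Z}$ for each macrotransition of the tail. For $i \geq N$, take $c_1 = \dom(T_i)$ and $c_2 = \range(T_i) = \dom(T_{i+1})$: both lie in $\AC_Z$ by the previous step, the relation $(c_1, w_i, c_2) \in \Delta\A$ holds directly because it is the multitransition $T_i$, and the tracking invariant gives $c_2 \in m_{i+1}$, so $c_2 \in m_{i+1} \cap \AC_Z$. These are exactly the witnesses required by the definition of $\AT{Z}$, so $(m_i, w_i, m_{i+1}) \in \AT{Z}$ for every $i \geq N$, which is the claimed suffix of $\sigma(w)$.

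I expect the only real obstacle to be the reverse must-inclusion in the second paragraph. The subtlety is that boundedness only controls $\Inf_s(\rho)$, whereas the conclusion concerns individual tail configurations $\dom(T_i)$; bridging this gap requires the structural fact that must-states, having only looping transitions, cannot be shed once entered, so that the hypothesis $\must(\Inf_s(\rho)) = \must(Z)$ genuinely forces all must-states of $Z$ into a common tail rather than merely into infinitely many (possibly different) configurations.
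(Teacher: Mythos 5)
Your proof is correct and takes essentially the same route as the paper, which treats the lemma as a direct summary of the preceding observation: boundedness by $Z$ is equivalent to a suffix of configurations in $\AC_Z$, and since $\mT$ tracks all runs of $\mA$, the run $\sigma(w)$ covers $\rho$ and the witnesses for $\AT{Z}$ can be read off each tail macrotransition. You merely make explicit two steps the paper leaves implicit, namely the induction establishing $\dom(T_i)\in m_i$ and the persistence of must-states (all their transitions being looping) needed to get $\must(\dom(T_i))=\must(Z)$ on a common tail.
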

In fact, the other direction can be proved as well: if $\sigma(w)$ contains
a suffix of macrotransitions of $\AT{Z}$, then $\mA$ has a run over $w$
bounded by $Z$.

For each $f\in F\cap Z$, we also define the set $\AT{Z}^f$ as the set of all
macrotransitions in $\AT{Z}$ such that $\mA$ contains a non-looping
transition of $f$ with the same label and with the target configuration not
leaving $Z$:
$$\AT{Z}^f=\{(m_1,\alpha,m_2) \in \AT{Z}\mid\exists (f,\alpha,c)\in\delta_\mA:
f\not\in c, c\subseteq Z\}$$

Using the sets $\AT{Z}$ and $\AT{Z}^f$, we define one generalized Rabin pair
$\mathcal{GR}_Z$ for each subset of states $Z \subseteq S$:
\begin{equation}
\mathcal{GR}_Z = (\delta\T \smallsetminus \AT{Z}, \{ \AT{Z}^f\}_{f \in F \cap Z})
\end{equation}

\begin{lemma}\label{lem:ag}
If there is an accepting run $\rho$ of $\mA$ over $w$ then the run 
$\sigma(w)$ of $\mT$ satisfies $\mathcal{GR}_{Z}$ for $Z=\Inf_s(\rho)$.
\end{lemma}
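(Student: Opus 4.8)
The plan is to unfold what it means for $\sigma(w)$ to satisfy the pair $\mGR_Z=(\delta\T\smallsetminus\AT{Z},\{\AT{Z}^f\}_{f\in F\cap Z})$, namely the two conditions (i) $\Inf_t(\sigma(w))\cap(\delta\T\smallsetminus\AT{Z})=\emptyset$, i.e.\ every infinitely recurring macrotransition is allowed, and (ii) for each $f\in F\cap Z$ some macrotransition of $\AT{Z}^f$ recurs infinitely often; I would prove these in turn. The common starting point is that $\rho$ is bounded by $Z$, which is immediate because $Z=\Inf_s(\rho)$ makes both defining conditions $\Inf_s(\rho)\subseteq Z$ and $\must(\Inf_s(\rho))=\must(Z)$ trivial.

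For (i) I would simply appeal to Lemma~\ref{lem:aux}: boundedness of $\rho$ by $Z$ yields a suffix of $\sigma(w)$ using only macrotransitions of $\AT{Z}$, so $\Inf_t(\sigma(w))\subseteq\AT{Z}$ and (i) follows.

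For (ii), fixing $f\in F\cap Z$, the first step is to show that infinitely many multitransitions $T_i$ of $\rho$ carry a \emph{non-looping} transition of $f$. Here I would argue by contradiction: if past some level every occurrence of $f$ in a $T_i$ were looping, then starting from any later level where $f$ is active (one exists since $f\in\Inf_s(\rho)$) the looping transition would propagate $f$ into every subsequent $\dom(T_j)$, producing a suffix in which each multitransition contains a looping transition of $f$ --- exactly what acceptance of $\rho$ forbids. The second step connects these transitions to $\AT{Z}^f$: because $Z=\Inf_s(\rho)$ and $S$ is finite, past some level only states of $Z$ occur, so the target $c_i$ of a non-looping $f$-transition (with label $w_i$) satisfies $c_i\subseteq\range(T_i)\subseteq Z$ together with $f\notin c_i$; and by Lemma~\ref{lem:aux} the macrotransition at that level already lies in $\AT{Z}$. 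Hence that macrotransition lies in $\AT{Z}^f$, and since $\delta\T$ is finite some fixed macrotransition of $\AT{Z}^f$ recurs infinitely often, giving (ii).

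I expect the main obstacle to be the first step of (ii): the co-B\"uchi acceptance condition is phrased as the absence of a suffix in which some $f$ loops in every multitransition, yet a may-state $f\in Z$ can legitimately reappear infinitely often by being re-entered from a higher must-state (e.g.\ $\F a$ beneath $\G\F a$). Converting ``$\rho$ is accepting'' into ``$f$ takes a non-looping transition infinitely often'' therefore requires the ``once looping, always looping'' propagation argument above rather than a direct reading of $\Inf_s(\rho)$; everything after that is bookkeeping with Lemma~\ref{lem:aux} and finiteness of $S$ and $\delta\T$.
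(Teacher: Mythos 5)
Your proposal is correct and follows essentially the same route as the paper's proof: Lemma~\ref{lem:aux} (applied to $\rho$, which is trivially bounded by $Z=\Inf_s(\rho)$) disposes of the $K$-component, and the $L$-components are witnessed by infinitely many non-looping $f$-transitions of $\rho$ with targets inside $Z$, whose corresponding macrotransitions in the $\AT{Z}$-suffix lie in $\AT{Z}^f$. The only difference is that you spell out, via the looping-propagation contradiction, the step the paper merely asserts --- that acceptance of $\rho$ together with $f\in\Inf_s(\rho)$ yields infinitely many such non-looping transitions --- which is a faithful elaboration rather than a different argument.
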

\begin{proof}
  As $\rho$ is bounded by $Z$, Lemma~\ref{lem:aux} implies that $\sigma(w)$
  has a suffix $r_ir_{i+1}\ldots$ of macrotransitions of $\AT{Z}$. Thus
  $\Inf_t(\sigma(w)) \cap (\delta\T \smallsetminus \AT{Z}) = \emptyset$.

  As $Z=\Inf_s(\rho)$ and $\rho=T_0T_1\ldots$ is accepting, for each $f\in F
  \cap Z$, $\rho$ includes infinitely many multitransitions $T_j$ where
  $f\in\dom(T_j)$ and $T_j$ contains a non-looping transition
  $(f,w_j,c)\in\delta\A$ satisfying $f\not\in c$ and $c\subseteq Z$. Hence,
  the corresponding macrotransitions $r_j$ that are also in the mentioned
  suffix $r_ir_{i+1}\ldots$ of $\sigma(w)$ are elements of $\AT{Z}^f$.
  Therefore, $\Inf_t(\sigma(w)) \cap \AT{Z}^f \neq \emptyset$ for each $f
  \in F \cap Z$ and $\sigma(w)$ satisfies $\mGR_Z$.\qed
\end{proof}

\begin{lemma}\label{lem:ga}
If a run $\sigma(w)$ of $\mT$ satisfies $\mathcal{GR}_Z$ then there is
an \emph{accepting} run of $\mA$ over $w$ bounded by $Z$.
\end{lemma}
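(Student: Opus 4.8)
The plan is to establish the converse of Lemma~\ref{lem:ag} by reconstructing an accepting MMAA run from the acceptance data of $\sigma(w)$. The hypothesis gives us two facts about $\sigma(w)$: first, $\Inf_t(\sigma(w))\cap(\delta\T\smallsetminus\AT{Z})=\emptyset$, meaning $\sigma(w)$ has a suffix using only macrotransitions of $\AT{Z}$; and second, for every $f\in F\cap Z$, $\sigma(w)$ visits $\AT{Z}^f$ infinitely often. I would first invoke the remark following Lemma~\ref{lem:aux} (the stated converse of that lemma) to obtain \emph{some} run $\rho$ of $\mA$ over $w$ bounded by $Z$. However, this $\rho$ need not be accepting, so the real work is to produce a bounded run in which every may-state $f\in F\cap Z$ eventually leaves itself infinitely often.

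First I would pin down the structure that the $\AT{Z}$-suffix of $\sigma(w)$ provides. Since $\mT$ tracks \emph{all} runs of $\mA$, each allowed macrotransition $r_i=(m_i,w_i,m_{i+1})\in\AT{Z}$ witnesses configurations $c_i\in m_i\cap\AC_Z$ and $c_{i+1}\in m_{i+1}\cap\AC_Z$ with $(c_i,w_i,c_{i+1})\in\Delta\A$. The goal is to chain these local witnesses into a single infinite sequence of configurations forming a genuine bounded run. The obstacle is that the witnesses chosen for consecutive macrotransitions need not agree on their shared macrostate, so a naive concatenation does not yield a valid run; I would resolve this with a König's-lemma / infinite-tree argument over the finitely many allowed configurations in $\AC_Z$, extracting an infinite branch that is a legitimate run of $\mA$ bounded by $Z$.

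The heart of the proof is then arranging that this extracted run is \emph{accepting}, i.e.\ that every may-state $f\in F\cap Z$ escapes itself infinitely often. Here I would use the second acceptance clause: because $\Inf_t(\sigma(w))\cap\AT{Z}^f\neq\emptyset$ for each such $f$, the run $\sigma(w)$ passes through macrotransitions in $\AT{Z}^f$ at infinitely many positions $j$, and at each such position $\mA$ admits a non-looping transition $(f,w_j,c)\in\delta\A$ with $f\notin c$ and $c\subseteq Z$. Since $f$ is a may-state it has selfloops for all letters, so along any stretch I may keep $f$ present and then, at the infinitely many $\AT{Z}^f$-positions, apply the non-looping escape transition instead of the selfloop. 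The delicate point is doing this \emph{simultaneously} for all $f\in F\cap Z$ while keeping the configurations inside $\AC_Z$; I would fold the escape choices into the tree construction above, so that the branch selected by König's lemma is one in which each accepting state is forced to escape at the guaranteed infinitely many positions. Because all visited configurations stay within $\AC_Z$ and $\must(\Inf_s(\rho))=\must(Z)$ is preserved by the must-state transitions, the resulting run is bounded by $Z$; and because no $f\in F\cap Z$ loops forever, no branch of its DAG contains an accepting state in a looping transition cofinally, so by the acceptance definition the run is accepting. I expect the main obstacle to be precisely this simultaneous-escape bookkeeping: ensuring a single branch realizes the infinitely-many-escapes requirement for \emph{every} accepting state at once, which is where the König's-lemma compactness over the finite set $\AC_Z$ does the essential work.
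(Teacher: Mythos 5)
Your proposal is correct in substance and isolates the two facts the paper's proof also runs on: every configuration of $\AC_Z$ can take a $w_j$-step staying inside $\AC_Z$ (must-states copy the transitions of the witness multitransition $T'$ supplied by $r_j\in\AT{Z}$, which applies because $\must(c)=\must(Z)=\must(\dom(T'))$ for every $c\in\AC_Z$, while may-states selfloop), and at every position with $r_j\in\AT{Z}^f$ the escape transition of $f$ exists as a property of the state $f$ alone, independently of the current configuration. But once you have these two observations, the K\H{o}nig's-lemma scaffolding is dead weight: since the per-state choices (follow $T'$, escape, or selfloop) are independent and always available, every configuration of $\AC_Z$ has a legal successor in $\AC_Z$ \emph{even after} all mandated escapes are forced, so the run can be built greedily by induction --- which is exactly what the paper does via its explicit three-case definition of $t_j^s$. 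In particular, the ``delicate simultaneous-escape bookkeeping'' you expected compactness to resolve is trivial once stated this way; and note that K\H{o}nig's lemma alone would not suffice without your pruning step, since an arbitrary infinite branch need not take escapes at the right moments, so the real content of your argument is the same node-has-a-child check the paper performs directly. Two smaller points. First, you misquote the definition of $\AT{Z}$: the witness source $c_1$ is required only to lie in $\AC_Z$, not in $m_i$ (see the paper's footnote), a looseness that underlines that the constructed run's configurations never need to belong to the macrostates at all --- so the witness-mismatch obstacle you set out to overcome does not actually constrain the construction. Second, you leave the start of the run implicit: the paper picks $c\in m_{i+1}\cap\AC_Z$ at the first $\AT{Z}$-macrotransition of the suffix and uses the powerset construction of $\mT$ to obtain a finite multitransition prefix $T_0\ldots T_i$ from an initial configuration of $\mA$ to $c$; your opening appeal to the unproved converse of Lemma~\ref{lem:aux} is unnecessary for this and borderline circular, since the paper states that converse as a byproduct of the very lemma you are proving.
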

\begin{proof}
  Let $\sigma(w)=r_0r_1\ldots$ be a run of $\mT$ satisfying $\mGR_Z$,
  i.e.~$\sigma(w)$ has a suffix of macrotransitions of $\AT{Z}$ and
  $\sigma(w)$ contains infinitely many macrotransitions of $\AT{Z}^f$ for
  each $f \in F \cap Z$.  Let $r_i=(m_i,w_i,m_{i+1})$ be the first
  macrotransition of the suffix.  The definition of $\AT{Z}$ implies that
  there is a configuration $c\in m_{i+1}\cap\AC_Z$.  The construction of
  $\mT$ guarantees that there exists a sequence of multitransitions of $\mA$
  leading to the configuration $c$. More precisely, there is a sequence
  $T_0T_1\ldots T_i$ such that $\dom(T_0)$ is an initial configuration of
  $\mA$, $T_j$ is labelled by $w_j$ for each $0\le j\le i$,
  $\range(T_j)=\dom(T_{j+1})$ for each $0\le j<i$, and $\range(T_i)=c$.  We
  show that this sequence is in fact a prefix of an accepting run of $\mA$
  over $w$ bounded by $Z$. 

  We inductively define a multitransition sequence $T_{i+1}T_{i+2}\ldots$
  completing this run.  The definition uses the suffix
  $r_{i+1}r_{i+2}\ldots$ of $\sigma(w)$.  Let us assume that $j>i$ and that
  $\range(T_{j-1})$ is a configuration of $\AC_Z$.  We define $T_j$ to
  contain one $w_j$-transition of $s$ for each $s\in\range(T_{j-1})$.  Thus
  we get $\dom(T_j)=\range(T_{j-1})$.  As $r_j\in\AT{Z}$, there exists a
  multitransition $T'$ labelled by $w_j$ such that both source and target
  configurations of $T'$ are in $\AC_Z$.  For each must-state
  $s\in\range(T_{j-1})$, $T_j$ contains the same transition leading from $s$
  as contained in $T'$.  For may-states $f\in\range(T_{j-1})$, we have two
  cases.  If $r_j\in\AT{Z}^f$, $T_j$ contains a non-looping transition
  leading from $f$ to some states in $Z$.  The existence of such a
  transition follows from the definition of $\AT{Z}^f$.  For the remaining
  may-states, $T_j$ uses selfloops.
  Formally, $T_j = \{ t_j^{s} \mid s\in\range(T_{j-1})\}$, where
$$
t_{j}^{s} = 
\begin{cases}
(s,w_j,c_s) \text{ contained in }  T' & \text{if } s \in \must(Z) \\
(s,w_j,\{s\}) & \text{if } s \in F \land r_j \notin \AT{Z}^s \\
(s,w_j,c_s) \text{ where } c_s\subseteq Z, s \notin c_s & \text{if } 
s \in F \land r_j \in \AT{Z}^s \\
\end{cases}
$$
One can easily check that $range(T_j)\in\AC_Z$ and we continue by building
$T_{j+1}$.  

To sum up, the constructed run is bounded by $Z$.  Moreover, $T_j$ contains
no looping transition of $f$ whenever $r_j\in\AT{Z}^f$. As the run
$\sigma(w)$ is accepting, $r_j\in\AT{Z}^f$ holds infinitely often for each
$f\in F\cap Z$. The constructed run of $\mA$ over $w$ is thus accepting.\qed
\end{proof}

The previous two lemmata give us the following theorem.
\begin{theorem}
The TGDRA $\mG = (\mT,\{ \mGR_Z \mid Z \subseteq S\})$ describes the same
language as $\mA$.
\end{theorem}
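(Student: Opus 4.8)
The plan is to prove the two language inclusions $L(\mA)\subseteq L(\mG)$ and $L(\mG)\subseteq L(\mA)$ separately, each of which drops out almost immediately from one of the two preceding lemmata. The only structural fact I need first is that $\mT$, being a deterministic and complete semiautomaton (for every macrostate $m$ and letter $\alpha$ the successor $\bigcup_{c\in m}\{c'\mid(c,\alpha,c')\in\Delta\A\}$ is uniquely determined, possibly the empty macrostate), has exactly one run $\sigma(w)$ over every word $w$. Consequently the statements ``$w\in L(\mG)$'' and ``$\sigma(w)$ is accepting'' are interchangeable without any worry about whether the run exists.

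For the inclusion $L(\mA)\subseteq L(\mG)$, I would take $w\in L(\mA)$, fix an accepting run $\rho$ of $\mA$ over $w$, and set $Z=\Inf_s(\rho)$. Since $Z\subseteq S$, the pair $\mGR_Z$ is one of the generalized Rabin pairs constituting the acceptance condition of $\mG$. Lemma~\ref{lem:ag} then tells me that $\sigma(w)$ satisfies precisely this pair $\mGR_Z$, so $\sigma(w)$ is an accepting run of $\mG$, i.e.~$w\in L(\mG)$.

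For the reverse inclusion $L(\mG)\subseteq L(\mA)$, I would take $w\in L(\mG)$, so that $\sigma(w)$ is accepting. By the definition of acceptance for a TGDRA this means $\sigma(w)$ satisfies some generalized Rabin pair of $\mG$, which by construction is $\mGR_Z$ for some $Z\subseteq S$. Lemma~\ref{lem:ga} then immediately yields an accepting run of $\mA$ over $w$ (bounded by $Z$), whence $w\in L(\mA)$.

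Because both directions are direct instantiations of Lemmata~\ref{lem:ag} and~\ref{lem:ga}, I do not anticipate any genuine obstacle at this stage; all the substantive content sits in those lemmata and in the auxiliary Lemma~\ref{lem:aux}. The only points that merit a sentence of care are the bookkeeping ones already noted: that $Z=\Inf_s(\rho)$ is indeed a subset of $S$, so that $\mGR_Z$ really is a pair of $\mG$, and that $\sigma(w)$ exists for every word, so that the two characterisations of membership in $L(\mG)$ coincide.
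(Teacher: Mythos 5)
Your proposal is correct and follows essentially the same route as the paper, which derives the theorem directly from Lemmata~\ref{lem:ag} and~\ref{lem:ga} (one per inclusion, with $Z=\Inf_s(\rho)$ in the forward direction). Your added remarks---that $\delta\T$ is total so $\sigma(w)$ exists for every word, and that $Z\subseteq S$ guarantees $\mGR_Z$ is among the pairs of $\mG$---are just the bookkeeping the paper leaves implicit.
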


\section{Translation of TGDRA to DRA}
This section presents a variant of the standard degeneralization procedure.  
At first we illustrate the idea on a TGDRA
$\mG'=(M,\Sigma,\delta_\mT,m_I,\{(K,\{L^j\}_{1\leq j \leq h})\})$ with one
generalized Rabin pair. Recall
that a run is accepting if it has a suffix not using macrotransitions of $K$
and using macrotransitions of each $L^j$ infinitely often.

An equivalent DRA $\mD'$ consists of $h+2$ copies of $\mG'$. The copies are
called \emph{levels}.  We start at the level $1$.  Intuitively, being at a
level $j$ for $1\le j\le h$ means that we are waiting for a transition from
$L^j$.  Whenever a transition of $K$ appears, we move to the level $0$.  A
transition $r\not\in K$ gets us from a level $j$ to the maximal level $l\ge
j$ such that $r\in L^{j'}$ for each $j\le j'< l$. The levels $0$ and $h+1$
have the same transitions (including target levels) as the level $1$. A run
of $\mG'$ is accepting if and only if the corresponding run of $\mD'$ visits
the level $0$ only finitely often and it visits the level $h+1$ infinitely
often.



In the general case, we track the levels for all generalized Rabin pairs
simultaneously. Given a TGDRA
$\mG=(M,\Sigma,\delta_\mT,m_I,\{(K_i,\{L_i^j\}_{1\leq j\leq h_i})\}_{1\leq i
  \leq k})$, we construct an equivalent DRA as $\mD=(Q,\Sigma,
\delta_{\mD},q_i,\{(K'_i,L'_i)\}_{1\leq i \leq k})$, where
\begin{itemize}
\item $Q=M \times \{0,1,\ldots,h_1{+}1\} \times \cdots 
\times \{0,1,\ldots,h_k{+}1\}$,
\item $((m,l_1,\ldots,l_k),\alpha,(m',l'_1,\ldots,l'_k)) \in \delta_\mD$ 
iff $r=(m,\alpha,m') \in \delta_\mT$ and for each $1\leq i \leq k$ it holds
$$l'_i = \begin{cases}
0 &\text{if }r\in K_i \\
\max\{l_i \leq l \leq h_i{+}1 \mid \forall l_i \leq j < l:r\in L_i^j\} &\text{if }r \notin K_i \land 1 \leq l_i \leq h_i \\
\max\{1 \leq l \leq h_i{+}1 \mid \forall 1 \leq j < l:r\in L_i^j\} &\text{if }r\notin K_i\land l_i \in \{0,h_i{+}1\}\text{,} \\
\end{cases}$$
\item $q_i = (m_I,1,\ldots,1)$,
\item $K'_i= \{(m,l_1,\ldots,l_k)\in Q \mid l_i=0\}$, and
\item $L'_i= \{(m,l_1,\ldots,l_k)\in Q \mid l_i=h_i+1\}$.
\end{itemize}


\section{Complexity}

This section discusses the upper bounds of the individual steps of our
translation and compares the overall complexity to the complexity of the
other translations.

Given a formula $\varphi$ of LTL($\Fs,\Gs$), we produce an MMAA with
at most $n$ states, where $n$ is the length of $\varphi$.  Then we build the
TGDRA $\mG$ with at most $2^{2^{n}}$ states and at most $2^n$ generalized
Rabin pairs.  To obtain the DRA $\mD$, we multiply the state space by at
most $|Z|+2$ for each generalized Rabin pair $\mGR_Z$.  The value of $|Z|$
is bounded by $n$. Altogether, we can derive an upper bound on the number of
states of the resulting DRA as
$$
|Q| \leq 
2^{2^{n}} \cdot (n+2)^{2^{n}} =
2^{2^{n}} \cdot 2^{{2^n}\cdot\log_2{(n+2)}} = 
2^{2^{n}} \cdot 2^{2^{n+\log_2\log_2(n+2)}} \in
2^{\mO(2^{n+\log \log n})},
$$
which is the same bound as in \cite{KE12}, but lower than $2^{\mO(2^{n+\log
    n})}$ of \texttt{ltl2dstar}.  It is worth mentioning that the number of
states of our TGDRA is bounded by $2^{2^{|\varphi|}}$ while the number of
states of the GDRA produced by Rabinizer is bounded by
$2^{2^{|\varphi|}}\cdot 2^{\AP(\varphi)}$. 


\section{Simplifications and Translation Improvements}
\label{sec:simplifications}

An important aspect of our translation process is simplification of all
intermediate results leading to smaller resulting DRA.

We simplify input formulae by reduction rules of LTL3BA,
see~\cite{BKRS12} for more details. Additionally, we rewrite the
subformulae of the form $\G\F\psi$ and $\F\G\psi$ to equivalent
formulae $\G\Fs\psi$ and $\F\Gs\psi$ respectively. This preference of
strict temporal operators often yields smaller resulting automata.

Alternating automata are simplified in the same way as in LTL2BA: removing
unreachable states, merging equivalent states, and removing redundant
transitions, see~\cite{GO01} for details.

We improve the translation of an MMAA $\mA$ to a TGDRA $\mG$ in order to
reduce the number of generalized Rabin pairs of $\mG$.  One can observe
that, for any accepting run $\rho$ of $\mA$, $\Inf_s(\rho)$ contains
only states reachable from some must-state.  Hence, in the construction of
acceptance condition of $\mG$ we can consider only subsets $Z$ of states
of $\mA$ of this form.  Further, we omit a subset $Z$ if, for each
accepting run over $w$ bounded by $Z$, there is also an accepting run
over $w$ bounded by some $Z'\subseteq Z$.  The formal description of
subsets $Z$ considered in the construction of the TGDRA $\mG$ is described
in Appendix~\ref{app:Zset}.

If a run $T_0T_1\ldots$ of an MMAA satisfies $\range(T_i)=\emptyset$ for
some $i$, then $T_j=\emptyset$ for all $j \geq i$ and the run is accepting.
We use this observation to improve the construction of the semiautomaton
$\mT$ of the TGDRA $\mG$: if a macrostate $m$ contains the empty
configuration, we remove all other configurations from $m$.

After we build the TGDRA, we simplify its acceptance condition in three
ways (similar optimizations are also performed by Rabinizer).
\begin{enumerate}
\item We remove some generalized Rabin pairs $(K_i,\{L_{i}^{j}\}_{j\in J_i})$
  that cannot be satisfied by any run, in particular when $K_i =
  \delta_{\mT}$ or $L_i^j=\emptyset$ for some $j \in J_i$.
\item We remove $L_{i}^{j}$ if there is some $l \in J_i$ such that $L_i^{l} 
\subseteq L_i^j$.
\item If the fact that a run $\rho$ satisfies the pair $\mGR_Z$ implies that
  $\rho$ satisfies also some other pair $\mGR_{Z'}$, we remove $\mGR_Z$.
\end{enumerate}

Finally, we simplify the state spaces of both TGDRA and DRA such that we
iteratively merge the equivalent states. Two states of a DRA $\mD$ are
equivalent if they belong to the same sets of the acceptance condition of
$\mD$ and, for each $\alpha$, their $\alpha$-transitions lead to the same
state. Two states of a TGDRA $\mG$ are equivalent if, for each $\alpha$,
their $\alpha$-transitions lead to the same state and belong to the same
sets of the acceptance condition of $\mG$. Moreover, if the initial state of
$\mD$ or $\mG$ has no selfloop, we check its equivalence to another state
regardless of the acceptance condition (note that a membership in acceptance
condition sets is irrelevant for states or transitions that are passed at
most once by any run).


Of course, we consider only the reachable state space at every step.


\section{Beyond LTL($\F_s$,$\G_s$) Fragment: May/Must in the Limit}
\label{sec:limmmaa}

The Section \ref{sec:MMAA2DRA} shows a translation of MMAA into TGDRA.  In
fact, our translation can be used for a larger class of very weak
alternating automata called \emph{may/must in the limit automata} (limMMAA).
A VWAA $\mB$ is a limMMAA if $\mB$ contains only must-states, states without
looping transitions, and co-B\"{u}chi accepting states (not exclusively
may-states), and each state reachable from a must-state is either a must- 
or a may-state. Note that each accepting run of a limMMAA has a suffix that
contains either only empty configurations, or configurations consisting of
must-states and may-states reachable from must-states. Hence, the MMAA to
TGDRA translation produces correct results also for limMMAA under an
additional
condition: generalized Rabin pairs $\mGR_Z$ are constructed only for sets
$Z$ that contain only must-states and may-states reachable from them.

We can obtain limMMAA by the LTL to VWAA translation of \cite{GO01} when it
is applied to an LTL fragment defined as
\[\varphi~::=~\psi~\mid~\varphi\vee\varphi~\mid~\varphi\wedge\varphi~\mid~
\X\varphi~\mid~\varphi\U\varphi\text{,}
\]
where $\psi$ ranges over LTL($\Fs$,$\Gs$). Note that this fragment is
strictly more expressive than LTL($\Fs$,$\Gs$).



\section{Experimental Results}
\label{sec:experiments}
We have made an experimental implementation of our translation (referred to
as \emph{LTL3DRA}). The translation of LTL to alternating automata is taken
from LTL3BA~\cite{BKRS12}.  We compare the automata produced
by LTL3DRA to those produced by Rabinizer and \texttt{ltl2dstar}. All the
experiments are run on a Linux laptop (2.4GHz Intel Core i7, 8GB of RAM)
with a timeout set to 5 minutes.

Tables given below (i) compare the sizes of the DRA produced by all the
tools and (ii) show the number of states of the generalized automata
produced by LTL3DRA and Rabinizer. Note that LTL3DRA uses TGDRA
whereas Rabinizer uses (state-based) GDRA,
hence the numbers of their states cannot be
directly compared. The sizes of DRA are written as $s(r)$, where $s$ is the
number of states and $r$ is the number of Rabin pairs.  For each formula,
the size of the smallest DRA (measured by the number of states and, in the
case of equality, by the number of Rabin pairs) is printed in bold.


\begin{table}[t!]
\centering
\setlength{\tabcolsep}{3pt}
\small
\begin{tabular}{|c|c||rc|rr|r|}
\hline
\multicolumn{2}{ |c|| }{\multirow{2}{*}{Formula} } &
  \multicolumn{2}{ |c| }{LTL3DRA} & 
  \multicolumn{2}{ c| }{Rabinizer} &
  \multicolumn{1}{ |c| }{\texttt{ltl2dstar}}
  \\ \cline{3-7}
\multicolumn{2}{ |c|| }{} &
\multicolumn{2}{|r|}{\scriptsize DRA~~~TGDRA} & \multicolumn{2}{|r|}{\scriptsize DRA~~GDRA} & \multicolumn{1}{ |c|  }{\scriptsize DRA}   \\ \cline{1-7}
\multicolumn{2}{ |c|| }{$ \G(a \lor \F b)$}& \textbf{3(2)}& 2 & 4(2) & 5~ & 4(1) \\
\multicolumn{2}{ |c|| }{$\F\G a \lor \F\G b \lor \G\F c$}& \textbf{8(3)}& 1 & \textbf{8(3)} & 8~ & \textbf{8(3)} \\
\multicolumn{2}{ |c|| }{$ \F(a \lor b)$}& \textbf{2(1)}& 2 & \textbf{2(1)} & 2~ & \textbf{2(1)} \\
\multicolumn{2}{ |c|| }{$ \G\F(a \lor b)$}& \textbf{2(1)}& 1 & \textbf{2(1)} & 4~ & \textbf{2(1)} \\
\multicolumn{2}{ |c|| }{$ \G(a \lor \F a)$}& \textbf{2(1)}& 1 & 2(2) & 2~ & \textbf{2(1)} \\
\multicolumn{2}{ |c|| }{$ \G(a \lor b \lor c)$}& \textbf{2(1)}& 2 & \textbf{2(1)} & 8~ & 3(1) \\
\multicolumn{2}{ |c|| }{$ \G(a \lor \F(b \lor c))$}& \textbf{3(2)}& 2 & 4(2) & 9~ & 4(1) \\
\multicolumn{2}{ |c|| }{$ \F a \lor \G b$}& \textbf{3(2)}& 3 & \textbf{3(2)} & 3~ & 4(2) \\
\multicolumn{2}{ |c|| }{$ \G(a \lor \F(b \land c))$}& \textbf{3(2)}& 2 & 4(2) & 11~ & 4(1) \\
\multicolumn{2}{ |c|| }{$ \F\G a \lor \G\F b$}& \textbf{4(2)}& 1 & \textbf{4(2)} & 4~ & \textbf{4(2)} \\
\multicolumn{2}{ |c|| }{$ \G\F(a \lor b) \land \G\F(b \lor c)$}& \textbf{3(1)}& 1 & \textbf{3(1)} & 8~ & 7(2) \\
\multicolumn{2}{ |c|| }{$ (\F\F a \land \G \neg a) \lor (\G\G \neg a \land \F a)$}& \textbf{1(0)}& 1 & \textbf{1(0)} & 1~ & \textbf{1(0)} \\
\multicolumn{2}{ |c|| }{$ \G\F a \land \F\G b$}& \textbf{3(1)}& 1 & \textbf{3(1)} & 4~ & \textbf{3(1)} \\
\multicolumn{2}{ |c|| }{$ ( \G\F a \land \F\G b ) \lor (\F\G\neg a \land \G \F \neg b ) $}& \textbf{4(2)}& 1 & \textbf{4(2)} & 4~ & 5(2) \\
\multicolumn{2}{ |c|| }{$ \F\G a \land \G\F a$}& \textbf{2(1)}& 1 & \textbf{2(1)} & 2~ & \textbf{2(1)} \\
\multicolumn{2}{ |c|| }{$ \G ( \F a \land \F b)$}& \textbf{3(1)}& 1 & \textbf{3(1)} & 4~ & 5(1) \\
\multicolumn{2}{ |c|| }{$ \F a \land \F \neg a$}& \textbf{4(1)}& 4 & \textbf{4(1)} & 4~ & \textbf{4(1)} \\
\multicolumn{2}{ |c|| }{$ ( \G ( b \lor \G\F a ) \land \G ( c \lor \G\F \neg a)) \lor \G b \lor \G c$}& \textbf{12(3)}& 4 & 18(4) & 18~ & 13(3) \\
\multicolumn{2}{ |c|| }{$ ( \G ( b \lor \F\G a ) \land \G ( c \lor \F\G \neg a)) \lor \G b \lor \G c$}& \textbf{4(2)}& 4 & 6(3) & 18~ & 14(4) \\
\multicolumn{2}{ |c|| }{$ ( \F ( b \land \F\G a ) \lor \F ( c \land \F\G \neg a)) \land \F b \land \F c$}& \textbf{5(2)}& 4 & \textbf{5(2)} & 18~ & 7(1) \\
\multicolumn{2}{ |c|| }{$ ( \F ( b \land \G\F a ) \lor \F ( c \land \G\F \neg a)) \land \F b \land \F c$}& \textbf{5(2)}& 4 & \textbf{5(2)} & 18~ & 7(2) \\
\multicolumn{2}{ |c|| }{$ \G\F(\F a \lor \G\F b \lor \F\G(a \lor b ))$}& \textbf{4(3)}& 1 & \textbf{4(3)} & 4~ & 14(4) \\
\multicolumn{2}{ |c|| }{$ \F\G(\F a \lor \G\F b \lor \F\G(a \lor b ))$}& \textbf{4(3)}& 1 & \textbf{4(3)} & 4~ & 145(9) \\
\multicolumn{2}{ |c|| }{$ \F\G(\F a \lor \G\F b \lor \F\G(a \lor b ) \lor \F\G b)$}& \textbf{4(3)}& 1 & \textbf{4(3)} & 4~ & 145(9) \\
\hline
\multirow{4}{*}{$\bigwedge_{i=1}^n(\G\F a_{i} \rightarrow \G\F b_{i})$}
&$n=1$&\textbf{4(2)}& 1 & \textbf{4(2)} & 4~ & \textbf{4(2)} \\
&$n=2$&\textbf{18(4)}& 1 & 20(4) & 16~ & 11324(8) \\
&$n=3$&\textbf{166(8)}& 1 & 470(8) & 64~ & \multicolumn{1}{|c|}{timeout}\\
&$n=4$&\textbf{7408(16)}& 1 & \multicolumn{2}{|c|}{timeout} & \multicolumn{1}{|c|}{timeout}\\
\hline
\multirow{5}{*}{$\bigwedge_{i=1}^n(\G\F a_i \lor \F\G a_{i+1})$}
&$n=1$&\textbf{4(2)}& 1 & \textbf{4(2)} & 4~ & \textbf{4(2)} \\
&$n=2$&\textbf{10(4)}& 1 & 11(4) & 8~ & 572(7) \\
&$n=3$&\textbf{36(6)}& 1 & 52(6) & 16~ & 290046(13) \\
&$n=4$&\textbf{178(9)}& 1 & 1288(9) & 32~ & \multicolumn{1}{|c|}{timeout}\\
&$n=5$&\textbf{1430(14)}& 1 & \multicolumn{2}{|c|}{timeout} & \multicolumn{1}{|c|}{timeout}\\
&$n=6$&\textbf{20337(22)}& 1 & \multicolumn{2}{|c|}{timeout} & \multicolumn{1}{|c|}{timeout}\\
\hline
\multicolumn{3}{c}{}&\multicolumn{1}{c}{~~~~~~~~~}&\multicolumn{1}{c}{}&\multicolumn{1}{c}{~~~~~~}
\end{tabular}
\smallskip
\caption{The benchmark from \cite{GKE12} extended by one parametric formula.} 
\label{tab:ATVA}
\end{table}


Table \ref{tab:ATVA} shows the results on formulae from \cite{GKE12}
extended with another parametric formula. For the two parametric formulae,
we give all the parameter values $n$ for which at least one tool finished
before timeout. For all formulae in the table, our experimental
implementation generates automata of the same or smaller size as the others.
Especially in the case of parametric formulae, the automata produced by
LTL3DRA are considerably smaller. We also note that the TGDRA constructed
for the formulae are typically very small.

Table~\ref{tab:SPEC} shows the results on formulae from \textsc{Spec
  Patterns}~\cite{DAC99} (available
online\footnote{\url{http://patterns.projects.cis.ksu.edu/documentation/patterns/ltl.shtml}}).
We only take formulae LTL3DRA is able to work with, i.e.~the formulae of the
LTL fragment defined in Section~\ref{sec:limmmaa}.  The fragment covers 27
out of 55 formulae listed on the web page.  The dash sign in Rabinizer's
column means that Rabinizer cannot handle the corresponding formula as it is
not from the LTL($\F,\G$) fragment.  For most of the formulae in the table,
LTL3DRA produces the smallest DRA. In the remaining cases, the DRA produced
by our translation is only slightly bigger than the smallest one.  The table
also illustrates that LTL3DRA handles many (pseudo)realistic formulae not
included in LTL($\F,\G$).  
Four more parametric benchmarks are provided in Appendix~\ref{sec:moreexp}.


\begin{table}[t!]
\centering
\begin{tabular}{|c||r c|rc|r|c|c||rc|rc|r|}
\cline{1-6}\cline{8-13}
\multicolumn{1}{ |c|| }{\multirow{2}{*}{} } &
  \multicolumn{2}{ |c| }{LTL3DRA} & 
  \multicolumn{2}{ c| }{Rabinizer} &
  \multicolumn{1}{ |c| }{\texttt{ltl2dstar}}&
 \multicolumn{1}{c}{~~} &
\multicolumn{1}{ |c|| }{\multirow{2}{*}{} } &
  \multicolumn{2}{ |c| }{LTL3DRA} & 
  \multicolumn{2}{ c| }{Rabinizer} &
  \multicolumn{1}{ |c| }{\texttt{ltl2dstar}}
 \\  \cline{2-6}\cline{9-13}
\multicolumn{1}{ |c||  }{} &
\multicolumn{2}{|c|}{\scriptsize DRA~~TGDRA} & \multicolumn{2}{|c|}{\scriptsize DRA~~GDRA} & \multicolumn{1}{ |c|  }{\scriptsize DRA} &
\multicolumn{1}{ |c|  }{} & \multicolumn{1}{ |c||  }{} &
\multicolumn{2}{|c|}{\scriptsize DRA~~TGDRA} & \multicolumn{2}{|c|}{\scriptsize DRA~~GDRA} & \multicolumn{1}{ |c|  }{\scriptsize DRA}\\  
\cline{1-6}\cline{8-13}
$\varphi_{2}$ & \textbf{4(2)}& 4 & \multicolumn{2}{|c|}{---}  & 5(2)~~~ & & $\varphi_{27}$ & \textbf{4(2)}& 4 & \multicolumn{2}{|c|}{---}  & 5(2)~~~~\\  
$\varphi_{3}$ & ~~4(2)& 3 & ~~4(2) & 5 & \textbf{4(1)}~~~ & & $\varphi_{28}$ & ~~6(3)& 3 & ~~8(3) & 14 & \textbf{5(1)}~~~~\\ 
$\varphi_{7}$ & \textbf{4(2)}& 3 & \multicolumn{2}{|c|}{---}  & \textbf{4(2)} ~~~&& $\varphi_{31}$ & \textbf{4(2)}& 4 & \multicolumn{2}{|c|}{---}  & 6(2)~~~~\\  
$\varphi_{8}$ & \textbf{3(2)}& 3 & \textbf{3(2)} & 5 & 4(2) ~~~&& $\varphi_{32}$ & \textbf{5(2)}& 5 & \multicolumn{2}{|c|}{---}  & 7(2)~~~~\\  
$\varphi_{11}$ & \textbf{6(2)}& 6 & \multicolumn{2}{|c|}{---}  & 10(3) ~~~&& $\varphi_{33}$ & \textbf{5(2)}& 5 & \multicolumn{2}{|c|}{---}  & 7(3)~~~~\\ 
$\varphi_{12}$ & \textbf{8(2)}& 8 & \multicolumn{2}{|c|}{---}  & 9(2) ~~~&&  $\varphi_{36}$ & 6(3)& 4 & \multicolumn{2}{|c|}{---}  & \textbf{6(2)}~~~~\\ 
$\varphi_{13}$ & \textbf{7(3)}& 7 & \multicolumn{2}{|c|}{---}  & 11(3) ~~~&&  $\varphi_{37}$ & \textbf{6(2)}& 6 & \multicolumn{2}{|c|}{---}  & 8(3)~~~~\\ 
$\varphi_{17}$ & \textbf{4(2)}& 4 & \multicolumn{2}{|c|}{---}  & 5(2) ~~~&&  $\varphi_{38}$ & 7(4)& 5 & \multicolumn{2}{|c|}{---}  & \textbf{6(3)}~~~~\\  
$\varphi_{18}$ & 4(2)& 3 & 4(2) & 5 & \textbf{4(1)} ~~~&&  $\varphi_{41}$ & \textbf{21(3)}& 7 & \multicolumn{2}{|c|}{---}  & 45(3)~~~~\\  
$\varphi_{21}$ & \textbf{4(2)}& 3 & \multicolumn{2}{|c|}{---}  & \textbf{4(2)} ~~~&&  $\varphi_{42}$ & \textbf{12(2)}& 12~ & \multicolumn{2}{|c|}{---}  & 17(2)~~~~\\  
$\varphi_{22}$ & \textbf{4(2)}& 4 & \multicolumn{2}{|c|}{---}  & 5(2) ~~~&&  $\varphi_{46}$ & \textbf{15(3)}& 5 & \multicolumn{2}{|c|}{---}  & 20(2)~~~~\\  
$\varphi_{23}$ & \textbf{5(3)}& 4 & \multicolumn{2}{|c|}{---}  & \textbf{5(3)} ~~~&&  $\varphi_{47}$ & 7(2)& 7 & \multicolumn{2}{|c|}{---} & \textbf{6(2)}~~~~\\  
$\varphi_{26}$ & \textbf{3(2)}& 2 & 4(2) & 5 & 4(1) ~~~&&  $\varphi_{48}$ & \textbf{14(3)}& 6 & \multicolumn{2}{|c|}{---}  & 24(2)~~~~\\  
\cline{1-6}
\multicolumn{7}{c|}{}&$\varphi_{52}$ & 7(2)& 7 & \multicolumn{2}{|c|}{---}  & \textbf{6(2)}~~~~\\  
\cline{8-13}
\end{tabular}
\caption{The benchmark with selected formulae from \textsc{Spec
    Patterns}. $\varphi_i$ denotes the $i$-th formula on the web page.}
\label{tab:SPEC}
\end{table}



\section{Conclusion}\label{sec:concl}
We present another Safraless translation of an LTL fragment to deterministic
Rabin automata (DRA).  Our translation employs a new class of \emph{may/must
  alternating automata}.  We prove that the class is expressively equivalent
to the LTL($\Fs,\Gs$) fragment.  Experimental results show that our
translation typically produces DRA of a smaller or equal size as the other
two translators of LTL (i.e.~Rabinizer and \texttt{ltl2dstar}) and it
sometimes produces automata that are significantly smaller.



\bibliographystyle{abbrv}
\bibliography{atva}

\newpage
\appendix


\section{Translation of MMAA to LTL($\Fs,\Gs$)}\label{sec:mmaa2ltl}
We can assume that may-states have no looping transitions except
selfloops. Indeed, any application of a looping transition that is not a
selfloop can be always replaced by an application of a selfloop with the
same label. This change is safe as it cannot transform an accepting run into
a non-accepting one.

Let $\mA=(Q,2^{\AP'},\delta,I,F)$ be an MMAA. For any $\alpha\in 2^{\AP'}$, we 
define $\psi_\alpha$ to be a formula satisfied exactly by the words starting 
with $\alpha$:
$$\psi_\alpha=
(\bigwedge_{a\in\alpha}a)\wedge(\bigwedge_{a\in\AP'\smallsetminus\alpha}\neg
a)$$ 
Now we inductively define a formula $\varphi_s$ for each $s\in Q$.  The
formula $\varphi_s$ is satisfied by any word for which there is an
accepting run of $\mA$ starting in the configuration $\{s\}$.  Admissibility of
the inductive definition follows from the fact that $\mA$ is a very weak
automaton, i.e.~there is a partial order on $Q$ such that transitions of a
state $s$ can lead only to $s$ or lower states.
\[
\varphi_s=\left\{
  \begin{array}{ll}
    \F\bigvee_{(s,\alpha,c)\in\delta,
      c\neq\{s\}}(\psi_\alpha\wedge\bigwedge_{q\in c}\X\varphi_q) & 
    \textrm{ if $s$ is a may-state}\\
    \G\bigvee_{(s,\alpha,c)\in\delta}(\psi_\alpha\wedge\bigwedge_{q\in
      c\smallsetminus\{s\}}\X\varphi_q) &
    \textrm{ if $s$ is a must-state}\\      
    \bigvee_{(s,\alpha,c)\in\delta}(\psi_\alpha\wedge\bigwedge_{q\in
      c}\X\varphi_q) &
    \textrm{ if $s$ is a loopless state}\\      
  \end{array}
  \right.
\]
Note that the conjunction of an empty set of conjuncts is $\true$ while
the disjunction of an empty set of disjuncts is $\neg\true$. It is easy to see
that each temporal operator $\X$ in $\varphi_s$ is in front of $\F$ or $\G$.
If we replace all occurrences of $\X\F$ by $\Fs$ and all occurrences of
$\X\G$ by $\Gs$,
we always get formulae of LTL($\Fs,\Gs$).

Finally, we define the formula $\varphi_\mA$ equivalent to the whole
automaton $\mA$ as
$$\varphi_\mA=\bigvee_{c\in I}\bigwedge_{s\in c}\varphi_s.$$

Hence, we have shown that the following theorem holds.
\begin{theorem}
  For each MMAA $\mA$ with an alphabet of the form $2^{\AP'}$, we can
  construct an LTL($\Fs,\Gs$) formula $\varphi_\mA$ such that
  $L(\mA)=L(\varphi_\mA)$.
\end{theorem}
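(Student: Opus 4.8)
The construction of $\varphi_\mA$ is already laid out above; the remaining work is to verify the semantic equivalence $L(\mA)=L(\varphi_\mA)$. The plan is to first establish a per-state characterization and then lift it to initial configurations. Concretely, I would prove the following lemma: for every state $s\in Q$ and every word $w\in(2^{\AP'})^\omega$, we have $w\models\varphi_s$ if and only if $\mA$ has an accepting run over $w$ starting from the singleton configuration $\{s\}$. Because $\mA$ is very weak, the target states of non-looping transitions induce a well-founded partial order on $Q$, and every $\varphi_q$ occurring under an $\X$ in the definition of $\varphi_s$ has $q$ strictly below $s$. Here I would use the stated reduction that may-states carry no looping transitions other than selfloops: every transition $(s,\alpha,c)$ with $c\neq\{s\}$ out of a may-state then satisfies $s\notin c$, so all spawned states drop strictly. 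This makes the inductive definition admissible and supplies the induction principle.

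The induction proceeds by cases on the type of $s$, matching each clause to the run semantics. For a must-state every transition is looping and $s\notin F$, so an accepting run keeps $s$ alive at every step, at position $i$ choosing a transition $(s,w_i,c)$ and spawning the obligations $c\smallsetminus\{s\}$, each to be met by the suffix $w\suf{i+1}$; this is exactly what the outer $\G$, together with the disjuncts $\psi_\alpha\wedge\bigwedge_q\X\varphi_q$ and the induction hypothesis, express. For a may-state $s\in F$ the acceptance condition forbids taking the selfloop forever, so an accepting run waits in $s$ for finitely many steps (always possible, since a may-state has a selfloop for each letter) and then fires a non-looping transition $(s,w_j,c)$ with $c\neq\{s\}$; the outer $\F$ captures precisely this ``eventually leave'' behavior, and the induction hypothesis handles the strictly lower spawned states. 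Loopless states admit no waiting and are handled by the plain disjunction, which also covers the atoms and $\true$. Throughout, $\psi_\alpha$ pins the current letter, each $\X$ realizes the one-step shift to $w\suf{i+1}$, the empty conjunction $\true$ encodes a discharged obligation, and the empty disjunction $\neg\true$ encodes a dead, rejecting state. Finally, rewriting every $\X\F$ to $\Fs$ and every $\X\G$ to $\Gs$ is sound by the equivalences $\Fs\psi\equiv\X\F\psi$ and $\Gs\psi\equiv\X\G\psi$ and keeps each $\varphi_s$ within LTL($\Fs,\Gs$).

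To conclude, I would lift the lemma to whole runs. A word lies in $L(\mA)$ iff $\mA$ has an accepting run, and every run has $\dom(T_0)\in I$; thus it suffices to show that $\mA$ accepts $w$ from a configuration $c$ iff it accepts $w$ from $\{s\}$ for every $s\in c$. The forward direction is immediate by projecting the run DAG onto the states reachable from a fixed $s\in c$, which yields an accepting singleton run. Combined with the lemma this gives $w\in L(\mA)$ iff there is $c\in I$ with $w\models\varphi_s$ for all $s\in c$, i.e.\ iff $w\models\bigvee_{c\in I}\bigwedge_{s\in c}\varphi_s=\varphi_\mA$.

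The main obstacle is the converse of the configuration step: merging accepting singleton runs into one accepting run from $c$. Since distinct branches may share a state and could select incompatible transitions for it, while a multitransition permits only one transition per source state, a naive union of DAGs need not be a valid run. The fix is to exploit that acceptance is a property of the state and the remaining suffix, not of a particular run: I would fix, for each state and each suffix, one canonical accepting continuation, and build the merged run by routing every shared state through its canonical choice. Consistency is then automatic because the choice depends only on the pair of state and suffix, and the co-B\"uchi condition is checked branch-wise, so the merged DAG is accepting exactly when every singleton run is. This canonical-choice merging, together with the may-state acceptance analysis inside the induction, is where the real care is required; the rest is bookkeeping against the definition of $\varphi_s$.
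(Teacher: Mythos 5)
Your overall architecture is sound and, for what it is worth, supplies far more detail than the paper itself, whose appendix gives only the construction of $\varphi_s$ and asserts the theorem; your per-state induction along the very-weak order, the case analysis for must-, may-, and loopless states, and the use of the selfloop-only normalization of may-states all match the intended (implicit) argument. The genuine gap is in the merge step, exactly where you locate ``the real care'': your claim that consistency is \emph{automatic} once the canonical continuation depends only on the pair (state, suffix) is false. Concretely, let $f$ be a may-state with a $\true$-selfloop and a non-looping exit transition enabled at every position (think of the state $\F a$ and the word $w=(\{a\})^\omega$). For each $i$, choose as canonical accepting run from $\{f\}$ over $w\suf{i}$ the run that takes the selfloop at step $i$ and exits at step $i+1$; each such run is accepting. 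But your merged run re-consults the canonical choice at every position: at position $i$ it assigns $f$ the first transition of the canonical run for $(f,i)$, namely the selfloop, at $i+1$ again the selfloop (the first transition of the \emph{new} canonical run for $(f,i+1)$), and so on forever. The merged run keeps a looping transition of $f$ in every multitransition of a suffix and is therefore rejected by the co-B\"uchi condition, although every singleton canonical run is accepting. Acceptance of each canonical run constrains that run's own future, not the policy obtained by stitching together only their first transitions, so it is not ``checked branch-wise'' in a way that survives re-rooting at every step; this procrastination phenomenon is precisely what an arbitrary canonical choice fails to exclude.

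The repair is to make the canonical choice procrastination-free rather than arbitrary: for a may-state $f$ live at position $i$, let $j\ge i$ be the \emph{least} position at which some non-looping transition of $f$ starts an accepting continuation staying inside the relevant states (equivalently, guided by your own lemma, the least $j$ at which some disjunct of the body of $\varphi_f$ holds on $w\suf{j}$); selfloop strictly before $j$ and exit at $j$. This choice still depends only on $f$ and the suffix, but it is stable under shifting --- if the exit chosen at $i$ is $j>i$, the choice consulted at $i+1$ is the same $j$ --- so the rank $j-i$ strictly decreases and every residence of every $f\in F$ in the merged run is finite; since exits then occur at the end of each residence, no suffix witnesses rejection, and the merged run is accepting. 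For must-states any per-position choice works (they are not in $F$), and loopless states occur only at step $0$. Note finally that the same merging issue already arises inside the formula-to-run direction of your per-state lemma, since distinct spawned obligations can share states; so you should prove the merge lemma once, for arbitrary configurations and with the earliest-exit policy, and invoke it both there and in the lifting step. With that emendation your proof goes through.
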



\section{Bounding Sets}\label{app:Zset}

Let $\mA = (S,\Sigma,\delta\A,I,F)$ be an MMAA or a limMMAA.  Here
we show how to compute the set $\mZ \subseteq 2^S$ of configurations
for which the generalized Rabin pairs $\mGR_Z$ of the
corresponding TGDRA are constructed.  If $\rho$ is an accepting run of $\mA$, then
$\Inf_s(\rho)$ is a subset of states reachable from must-states.  Further,
for each $w\in L(\mA)$ there is an accepting run $\rho$ over $w$ such that,
for each $f\in F$, either $f\notin\Inf_s(\rho)$, or $\rho$ uses a single
non-looping transition of $f$ (infinitely often) and the selfloop for~$f$.
We say that these runs are \emph{modest}.

The function $z:S\rightarrow 2^{2^S}$ recursively computes, for a given
state $s$, the sets of states that can potentially be reached from $s$ (not
necessary as one configuration) infinitely often by a modest run visiting
$s$ infinitely often. If $s$ is a may-state, a modest run uses only one of
the non-looping transitions of $s$ and possibly also the selfloop of $s$.
If $s$ is a must-state, a modest run can use an arbitrary combination of
transitions of $s$. Formally, $z$ is defined as
\begin{align*}
   z(s) = &\begin{cases}
   \{\{s\}\} \otimes \bigcup_{\substack{(s,\alpha,c) \in \delta\A, \\ s \notin 
   c}} \mathbf{z}(c) & \text{if } s \notin \must(S) \\[3ex]
   \{\{s\}\} \otimes \bigcup_{C \subseteq {\targets(s)}}  \mathbf{z}(\bigcup\limits_{c 
   \in C} c)      & \text{if } s \in \must(S)\text{,}
  \end{cases}
\end{align*}
where $\otimes$ is an auxiliary operation defined for each $W_1,W_2\subseteq
2^S$ as
\[
W_1 \otimes W_2 = \textstyle \bigcup_{\substack{c_1 \in W_1 \\
    c_2 \in W_2}} \{c_1 \cup c_2\}\text{,}
\]
$\targets(s)$ contains configurations reachable from $s$ in one step minus
the $s$ itself, i.e.
\[
\targets(s) = \left\{ c \in 2^{S\smallsetminus\{s\}} \mid (s, \alpha,c \cup
  \{s\}) \in \delta\A, \alpha \in \Sigma\right\}\text{,}
\]
and $\mathbf{z}$ is an auxiliary function defined for each configuration as
\[
\mathbf{z}(c) = \textstyle \bigotimes_{s \in c} z(s)
\qquad\textrm{and}\qquad
\mathbf{z}(\emptyset) = \{\emptyset\}.
\]

The function $y : S \rightarrow 2^{2^S}$ computes, for a given state $s$,
the sets of states that can potentially be reached from $s$ (not necessary
as one configuration) infinitely often by a modest run visiting $s$ at least
once.  Note that if $s$ is a must-state $z(s)=y(s)$.  Formally, $y$ is defined as
\begin{align*}
  y(s) = &\begin{cases}
   \bigcup_{\substack{(s,\alpha,c) \in \delta\A, \\ s \notin c}} \mathbf{y}(c) & 
   \text{if } s \notin \must(S) \\[3ex]
   \{\{s\}\} \otimes \bigcup_{C \subseteq {\targets(s)}}  \mathbf{z}(\bigcup\limits_{c 
   \in C} c)      & \text{if } s \in \must(S)\text{,}
  \end{cases}
\end{align*}
where $\mathbf{y}$ is an auxiliary function defined for each configuration
as
\[
\mathbf{y}(c) = \textstyle \bigotimes_{s \in c} y(s)
\qquad\textrm{and}\qquad
\mathbf{y}(\emptyset) = \{\emptyset\}.
\]

Finally, the set $\mZ$ of configurations $Z$ for which we construct the
generalized Rabin pairs $\mGR_Z$ of the corresponding TGDRA is defined as
$\mZ = \textstyle \bigcup_{c \in I} \mathbf{y}(c)$.
Note that the states that are not successors of any must-state are not
included in any element of the set $\mZ$.


\section{More Experimental Results}\label{sec:moreexp}

Table~\ref{tab:app} provides experimental results for four parametric
formulae, namely $\theta(n)=\neg ((\bigwedge_{i=1}^n\G\F a_{i})\rightarrow
\G (b_1 \rightarrow \F b_2))$ of~\cite{GO01}, its negation
$\neg\theta(n)$, $U(n)=(\ldots((a_1 \U a_2) \U a_3) \ldots ) \U a_n$ of~\cite{GH06}, and
$U_2(n)=a_1 \U (a_2 \U ( \ldots \U a_n)\ldots)$ of~\cite{GH06}. Note that
$U(n)$ and $U_2(n)$ are outside the fragment LTL($\F$,$\G$).

In all the cases but one ($U(5)$), LTL3DRA generates DRA of the same size as
the other tools or smaller.  Finally, LTL3DRA runs slower than
\texttt{ltl2dstar} on $U_2(n)$, while it is the fastest on the other three
formulae.  We list all instances of the formulae for which at least one of
the tools produces a DRA before timeout.


\begin{table}[b]
\centering
\begin{tabular}{|c||r@{\quad}r@{\quad}|r@{\quad}r@{\quad}|r@{\quad}|}
\hline
\multicolumn{1}{ |c|| }{\multirow{2}{*}{Formula} } &
  \multicolumn{2}{ |c }{LTL3DRA} & 
  \multicolumn{2}{ |c| }{Rabinizer} &
  \multicolumn{1}{ |c| }{\texttt{~ltl2dstar~}}
  \\ \cline{2-6}
\multicolumn{1}{ |c||  }{} &
\multicolumn{2}{|r|}{\scriptsize DRA~~~TGDRA} & \multicolumn{2}{|r|}{\scriptsize DRA~~GDRA} & \multicolumn{1}{ |c|  }{\scriptsize DRA}  \\ \cline{1-6}
\hline
$\theta(1)$ & \textbf{3(1)}& 2 & \textbf{3(1)} & 10 & 4(1) \\
$\theta(2)$ & \textbf{4(1)}& 2 & 5(1) & 20 & 8(1) \\
$\theta(3)$ & \textbf{5(1)}& 2 & 7(1) & 40 & 12(1) \\
$\theta(4)$ & \textbf{6(1)}& 2 & 9(1) & 80 & 16(1) \\
$\theta(5)$ & \textbf{7(1)}& 2 & 11(1) & 160 & 20(1) \\
$\theta(6)$ & \textbf{8(1)}& 2 & 13(1) & 320 & 24(1) \\
$\theta(7)$ & \textbf{9(1)}& 2 & \multicolumn{2}{|c|}{timeout} & 28(1) \\
$\theta(8)$ & \textbf{10(1)}& 2 & \multicolumn{2}{|c|}{timeout} & 32(1) \\
$\theta(9)$ & \textbf{11(1)}& 2 & \multicolumn{2}{|c|}{timeout} & 36(1) \\
$\theta(10)$ & \textbf{12(1)}& 2 & \multicolumn{2}{|c|}{timeout} & \multicolumn{1}{|c|}{timeout}\\
$\theta(11)$ & \textbf{13(1)}& 2 & \multicolumn{2}{|c|}{timeout} & \multicolumn{1}{|c|}{timeout}\\
$\theta(12)$ & \textbf{14(1)}& 2 & \multicolumn{2}{|c|}{timeout} & \multicolumn{1}{|c|}{timeout}\\
\hline
$\neg\theta(1)$ & \textbf{6(3)}& 2 & 8(3) & 10 & 7(2) \\
$\neg\theta(2)$ & \textbf{12(4)}& 2 & 16(4) & 20 & 13(3) \\
$\neg\theta(3)$ & \textbf{24(5)}& 2 & 32(5) & 40 & 25(4) \\
$\neg\theta(4)$ & \textbf{48(6)}& 2 & 64(6) & 80 & 49(5) \\
$\neg\theta(5)$ & \textbf{96(7)}& 2 & 128(7) & 160 & 97(6) \\
$\neg\theta(6)$ & \textbf{192(8)}& 2 & \quad 256(8) & 320 & 193(7) \\
$\neg\theta(7)$ & \textbf{384(9)}& 2 & \multicolumn{2}{|c|}{timeout}  & 385(8) \\
$\neg\theta(8)$ & \textbf{768(10)}& 2 & \multicolumn{2}{|c|}{timeout}  & 769(9) \\
$\neg\theta(9)$ & \textbf{1536(11)}& 2 & \multicolumn{2}{|c|}{timeout}  & 1537(10) \\
$\neg\theta(10)$ & \textbf{3072(12)}& 2 & \multicolumn{2}{|c|}{timeout}  & 3073(11) \\
$\neg\theta(11)$ & \quad\textbf{6144(13)}& 2 & \multicolumn{2}{|c|}{timeout}  & \multicolumn{1}{|c|}{timeout}\\
\hline
$U(2)$ & \textbf{3(1)}& 3 & \multicolumn{2}{|c|}{---} & \textbf{3(1)} \\
$U(3)$ & \textbf{5(1)}& 5 & \multicolumn{2}{|c|}{---} & \textbf{5(1)} \\
$U(4)$ & \textbf{9(1)}& 9 & \multicolumn{2}{|c|}{---} & \textbf{9(1)} \\
$U(5)$ & 24(1)& 24 & \multicolumn{2}{|c|}{---} & \textbf{17(1)} \\
$U(6)$ & \textbf{68(1)}& 68 & \multicolumn{2}{|c|}{---} & \multicolumn{1}{|c|}{timeout}\\
$U(7)$ & \textbf{212(1)}& 212 & \multicolumn{2}{|c|}{---} & \multicolumn{1}{|c|}{timeout}\\
$U(8)$ & \textbf{719(1)}& 719 & \multicolumn{2}{|c|}{---} & \multicolumn{1}{|c|}{timeout}\\
\hline
$U_2(2)$ & \textbf{3(1)}& 3 & \multicolumn{2}{|c|}{---} & \textbf{3(1)} \\
$U_2(3)$ & \textbf{4(1)}& 4 & \multicolumn{2}{|c|}{---} & \textbf{4(1)} \\
$U_2(4)$ & \textbf{5(1)}& 5 & \multicolumn{2}{|c|}{---} & \textbf{5(1)} \\
$U_2(5)$ & \textbf{6(1)}& 6 & \multicolumn{2}{|c|}{---} & \textbf{6(1)} \\
$U_2(6)$ & \textbf{7(1)}& 7 & \multicolumn{2}{|c|}{---} & \textbf{7(1)} \\
$U_2(7)$ & \textbf{8(1)}& 8 & \multicolumn{2}{|c|}{---} & \textbf{8(1)} \\
$U_2(8)$ & \textbf{9(1)}& 9 & \multicolumn{2}{|c|}{---} & \textbf{9(1)} \\
$U_2(9)$ & \textbf{10(1)}& 10 & \multicolumn{2}{|c|}{---} & \textbf{10(1)} \\
$U_2(10)$ & \textbf{11(1)}& 11 & \multicolumn{2}{|c|}{---} & \textbf{11(1)} \\
$U_2(11)$ & \textbf{12(1)}& 12 & \multicolumn{2}{|c|}{---} & \textbf{12(1)} \\
$U_2(12)$ & \textbf{13(1)}& 13 & \multicolumn{2}{|c|}{---} & \textbf{13(1)} \\
$U_2(13)$ & \multicolumn{2}{c|}{timeout} & \multicolumn{2}{|c|}{---} & \textbf{14(1)} \\
\hline
\end{tabular}
\smallskip
\caption{More parametric benchmarks.} 
\label{tab:app}
\end{table}


%


\end{document}

